\documentclass[11pt]{article}
\usepackage{amsmath,amsthm,amsfonts,amssymb}
\usepackage{fullpage}
\usepackage{todonotes}
\usepackage[colorlinks=true]{hyperref}
\newtheorem{theorem}{Theorem}
\newtheorem{lemma}{Lemma}

\newtheorem{corollary}{Corollary}

\theoremstyle{definition}
\newtheorem{definition}{Definition}
\theoremstyle{plain}

\newcommand{\bbN}{\mathbb{N}}
\newcommand{\bit}{\{0,1\}}
\newcommand{\bv}{\mathbf{v}}

\newcommand{\bx}{\mathbf{x}}
\newcommand{\by}{\mathbf{y}}
\newcommand{\bz}{\mathbf{z}}

\newcommand{\calP}{\mathcal{P}}

\newcommand{\calR}{\mathcal{R}}
\newcommand{\calS}{\mathcal{S}}
\newcommand{\calT}{\mathcal{T}}
\newcommand{\calX}{\mathcal{X}}
\newcommand{\calY}{\mathcal{Y}}

\newcommand{\dist}{\mathrm{d}}
\newcommand{\E}{\mathop{\mathrm{E}}}
\newcommand{\F}{\mathbb{F}}

\newcommand{\no}{\mathrm{no}}
\newcommand{\NonIden}{\textsc{NotEq}}
\newcommand{\set}[1]{\{#1\}}

\newcommand{\yes}{\mathrm{yes}}

\title{A Characterization of Constant-Sample Testable Properties}

\author{Eric Blais\thanks{Supported by an NSERC Discovery Grant}\\
David R. Cheriton School of Computer Science \\
University of Waterloo \\
\texttt{eric.blais@uwaterloo.ca}
\and
Yuichi Yoshida\thanks{Supported by JSPS Grant-in-Aid for Young Scientists (B) (No.~26730009), MEXT Grant-in-Aid for Scientific Research on Innovative Areas (24106003), and JST, ERATO, Kawarabayashi Large Graph Project.}\\
  National Institute of Informatics
  \emph{and}\\
  Preferred Infrastructure, Inc.\\
  \texttt{yyoshida@nii.ac.jp}
}

\begin{document}
\maketitle
\begin{abstract}
We characterize the set of properties of Boolean-valued functions on a finite domain $\mathcal{X}$ that are testable with a constant number of samples.
  Specifically, we show that a property $\mathcal{P}$ is testable with a constant number of samples if and only if it is (essentially) a $k$-part symmetric property for some constant $k$,
  where a property is \emph{$k$-part symmetric} if there is a partition $S_1,\ldots,S_k$ of $\mathcal{X}$ such that whether $f:\mathcal{X} \to \{0,1\}$ satisfies the property is determined solely by the densities of $f$ on $S_1,\ldots,S_k$.

  We use this characterization to obtain a number of corollaries, namely:
  \begin{itemize}
  \item A graph property $\mathcal{P}$ is testable with a constant number of samples if and only if whether a graph $G$ satisfies $\mathcal{P}$ is (essentially) determined by the edge density of $G$.
  \item An affine-invariant property $\mathcal{P}$ of functions $f:\mathbb{F}_p^n \to \{0,1\}$ is testable with a constant number of samples if and only if whether $f$ satisfies $\mathcal{P}$ is (essentially) determined by the density of $f$.
  \item For every constant $d \geq 1$, monotonicity of functions $f : [n]^d \to \{0, 1\}$ on the $d$-dimensional hypergrid is testable with a constant number of samples.
  \end{itemize}
\end{abstract}


\section{Introduction}

Property testing~~\cite{RubinfeldS96,Goldreich:1998wa} is concerned with the general question: for which properties of combinatorial objects can we efficiently distinguish the objects that have the property from those that are ``far'' from having the same property? Consideration of this question has led to surprisingly powerful results: many natural graph properties~\cite{Goldreich:1998wa,Alon:2009gn}, algebraic properties of functions on finite fields~\cite{RubinfeldS96,Yoshida:2014tq}, and structural properties of Boolean functions~\cite{FKRSS04,DLM+07}, for example, can be tested with a \emph{constant} number of queries to the object being tested.
Nearly all of these results appear to rely critically on the testing algorithm's ability to query the unknown object at locations of its choosing. The goal of the present work is to determine to what extent this is actually the case: is it possible that some (or most!) of these properties can still be tested efficiently even if the tester has \emph{no} control over the choice of queries that it makes?

Our main question is made precise in the \emph{sample-based} property testing model originally introduced by Goldreich, Goldwasser, and Ron~\cite{Goldreich:1998wa}.
For a finite set $\calX$, let $\{0,1\}^\calX$ denote the set of Boolean-valued
functions on $\calX$ endowed with the normalized Hamming distance metric
$\dist(f,g) := |\{ x \in \calX : f(x) \neq g(x)\}|/|\calX|$.
A \emph{property} of functions mapping $\calX$ to $\{0,1\}$ is a subset
$\calP \subseteq \{0,1\}^\calX$.
A function is \emph{$\epsilon$-close} to $\calP$ if it is in the set
$\calP_\epsilon := \{f \in \{0,1\}^\calX : \exists g \in \calP \mbox{ s.t. }
\dist(f,g) \le \epsilon\}$; otherwise, it is \emph{$\epsilon$-far} from $\calP$.
An \emph{$s$-sample $\epsilon$-tester} for a property $\calP$
is a randomized algorithm with bounded error 
that observes $s$ pairs $(x_1,f(x_1)),\ldots,(x_s,f(x_s)) \in \calX \times \{0,1\}$ with
$x_1,\ldots,x_s$ drawn independently and uniformly at random from $\calX$
and then accepts when $f \in \calP$ and rejects when $f$ is
$\epsilon$-far from $\calP$.
The \emph{sample complexity} of $\calP$ for some $\epsilon > 0$ is the minimum value of
$s$ such that it has an $s$-sample $\epsilon$-tester.
When the sample complexity of $\calP$
is independent of the domain size $|\calX|$ of the functions for every $\epsilon > 0$, we say that
$\calP$ is \emph{constant-sample testable}.

In the 20 years since the original introduction of the sample-based property testing model,
a number of different properties have been shown to be constant-sample testable,
including all symmetric properties (folklore; see the discussion in~\cite{KaufmanS08}),
unions of intervals~\cite{Kearns:2000ev}, decision trees over
low-dimensional domains~\cite{Kearns:2000ev},
and convexity of images~\cite{ Raskhodnikova03,Berman:2015tb}.
Conversely, many other properties, such as
monotonicity of Boolean functions~\cite{Goldreich:2000ke},
linearity~\cite{ AlonHW13,Goldreich:2015fh},
linear threshold functions~\cite{Balcan:2012ew}, and
$k$-colorability of graphs~\cite{Goldreich:2015fh}, are known to not be
constant-sample testable.
Yet, despite a wide-spread belief that most properties are not testable with a constant number of samples, it remained open until now to determine whether this is actually the case or not. Our main result settles this question, and in the process unifies all of the above results and explains what makes various properties testable with a constant number of samples or not.

\subsection{Main result}

We show that constant-sample testability is closely tied to a
particular notion of symmetry---or invariance---of properties.
Let $\calS_\calX$ denote the set of permutations on a finite set $\calX$, and for any subset $S \subseteq \calX$, let $\calS_{\calX}^{(S)}$ denote the set of permutations on $\calX$ that preserves the elements in $S$.
A permutation $\pi \in \calS_\calX$ acts on functions $f : \calX \to \{0,1\}$
in the obvious way: $\pi f$ is the function that satisfies
$(\pi f)(x) = f( \pi x)$ for every $x \in \calX$.
The property $\calP \subseteq \{0,1\}^\calX$ is
\emph{invariant} under a permutation $\pi \in \calS_\calX$
if for every $f \in \calP$, we also have $\pi f \in \calP$.
$\calP$ is \emph{(fully) symmetric} if it is invariant
under all permutations in $\calS_\calX$.
The following definition relaxes this condition to obtain a notion of
``partial'' symmetry.

\begin{definition}
  For any $k \in \bbN$, the property $\calP \subseteq \bit^\calX$ is
  \emph{$k$-part symmetric} if there is a partition  of $\calX$ into
  $k$ parts $X_1, \ldots, X_k$ such that $\calP$ is invariant under
  all the permutations in $\calS_{X_1} \times \cdots \times \calS_{X_k}$,
  where $\calS_{X_i}$ is the set of permutations over $\calX$ that
  preserves elements in $X_i$.
\end{definition}

Equivalently, $\calP$ is $k$-part symmetric if there exists a partition
$X_1,\ldots,X_k$ of $\calX$ such that the event $f \in \calP$ is completely
determined by the density $\frac{|f^{-1}(1) \cap X_i|}{|X_i|}$ of $f$ in each of the
sets $X_1,\ldots,X_k$.
Our main result shows that $O(1)$-part symmetry is also essentially equivalent
to constant-sample testability.

\begin{theorem}\label{thm:characterization}
  The property $\calP$ of a function $f:\calX \to \bit$ is constant-sample testable if and only if for any $\epsilon > 0$, there exists a constant $k = k_\calP(\epsilon)$ that is independent of $\calX$ and a $k$-part symmetric property $\calP'$
  such that $\calP \subseteq \calP' \subseteq \calP_{\epsilon}$.
\end{theorem}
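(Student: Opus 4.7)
For the $(\Leftarrow)$ direction, fix $\epsilon > 0$ and suppose that $\calP \subseteq \calP' \subseteq \calP_{\epsilon/2}$ with $\calP'$ being $k$-part symmetric on a partition $X_1,\ldots,X_k$ of $\calX$. The plan is to design a tester that draws $\mathrm{poly}(k,1/\epsilon)$ samples, ignores ``tiny'' parts with $|X_i|/|\calX| < \epsilon/(4k)$ (which jointly contribute at most $\epsilon/4$ to any Hamming distance), empirically estimates the density of $f$ on each remaining part, and accepts iff the empirical density vector lies within $\epsilon/8$ of some density vector realized by $\calP'$ in the $|X_i|/|\calX|$-weighted $\ell_1$ norm. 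Completeness is immediate from $\calP \subseteq \calP'$. For soundness, when $f$ is $\epsilon$-far from $\calP$ the containment $\calP' \subseteq \calP_{\epsilon/2}$ and the triangle inequality yield $\dist(f,\calP') \geq \epsilon/2$; since $\calP'$ is determined solely by density vectors on the partition, this distance equals exactly the weighted $\ell_1$ distance between the density vector of $f$ and the set of density vectors realized by $\calP'$, and so is detected by the empirical estimate.

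For the $(\Rightarrow)$ direction, fix $\epsilon > 0$ and, using constant-sample testability, take an $s$-sample $(\epsilon/2)$-tester $T$ for $\calP$ (with $s = s(\epsilon)$ a constant in $|\calX|$) whose acceptance function $\mathrm{acc}(f) := \Pr[T \text{ accepts } f]$ satisfies $\mathrm{acc}(f) \geq 2/3$ for $f \in \calP$ and $\mathrm{acc}(f) \leq 1/3$ for $f$ that is $(\epsilon/2)$-far from $\calP$. Because $T$ makes $s$ i.i.d.\ uniform queries and then decides based on the resulting $s$ pairs, $\mathrm{acc}$ is a multilinear polynomial of degree at most $s$ in the variables $\{f(x)\}_{x \in \calX}$, which I will write as
\[
\mathrm{acc}(f) = \sum_{S \subseteq \calX,\ |S| \leq s} \hat c_S \prod_{x \in S} f(x),
\]
with each $|\hat c_S|$ of order $|\calX|^{-|S|}$ since the monomial $\prod_{x \in S} f(x)$ appears only on sample patterns that cover $S$.

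The core step is to find a partition $\pi = (X_1,\ldots,X_k)$ of $\calX$ into $k = k(s,\epsilon)$ parts, independent of $|\calX|$, such that $\mathrm{acc}$ is nearly invariant under the subgroup $\calS_{X_1} \times \cdots \times \calS_{X_k}$: explicitly, $|\mathrm{acc}(\sigma f) - \mathrm{acc}(f)| \leq 1/10$ for every such $\sigma$ and every $f \in \bit^\calX$. Expanding
\[
\mathrm{acc}(f) - \mathrm{acc}(\sigma f) = \sum_{|T| \leq s} (\hat c_T - \hat c_{\sigma T})\,\mathbf{1}[T \subseteq f^{-1}(1)]
\]
and grouping by degree, the level-$j$ deviation is controlled by a hypergraph cut norm of the residual symmetric $j$-tensor $(\hat c_T - \hat c_{\sigma T})_{|T|=j}$, so I would obtain $\pi$ by applying a Frieze--Kannan-style weak regularity lemma at each level $j = 1,\ldots,s$ and taking the common refinement. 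Given such a $\pi$, define $\calP' := \{\sigma g : g \in \calP,\ \sigma \in \calS_{X_1} \times \cdots \times \calS_{X_k}\}$; this is $k$-part symmetric and contains $\calP$, and for any $\sigma g \in \calP'$ with $g \in \calP$ we have $\mathrm{acc}(\sigma g) \geq \mathrm{acc}(g) - 1/10 \geq 2/3 - 1/10 > 1/3$, so by soundness of $T$ the function $\sigma g$ must lie in $\calP_{\epsilon/2} \subseteq \calP_\epsilon$. The main obstacle is the regularity step: ensuring that a single partition is simultaneously regular at all $s$ levels while keeping $k$ depending only on $s$ and $\epsilon$ (and not on $|\calX|$) is the quantitative heart of the proof.
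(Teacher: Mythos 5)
Your $(\Leftarrow)$ direction is essentially the paper's argument: estimate the per-part densities of $f$, and accept iff the empirical vector is close (in a part-weighted $\ell_1$ sense) to some density vector realized by $\calP'$, using the observation that for a $k$-part symmetric $\calP'$ the Hamming distance from $f$ to $\calP'$ equals the weighted $\ell_1$ distance between density vectors. The paper works with unnormalized counts $c_{S_i}(f)$ and a threshold of $\epsilon|\calX|/4$, which avoids the need to discard tiny parts, but the two formulations are equivalent; your $\epsilon/2,\epsilon/4,\epsilon/8$ budget also closes once the estimation error is set to $\epsilon/16$ or so.

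Your $(\Rightarrow)$ direction is a genuinely different route, and it is where the real gap lies. You multilinearize $\mathrm{acc}(f)=\sum_{|T|\le s}\hat c_T\prod_{x\in T}f(x)$ and propose to apply Frieze--Kannan regularity, level by level, to force approximate invariance of $\mathrm{acc}$ under partition-preserving permutations. The paper instead applies a single weak regularity lemma (Lemma~\ref{lem:regularity}) to an $s$-uniform weighted hypergraph on the \emph{doubled} vertex set $V=\calX\times\bit$, where the weight of the hyperedge $((x_1,y_1),\ldots,(x_s,y_s))$ is the acceptance probability $T(\bx,\by)\in[0,1]$; identifying each $f$ with $S_f=\{(x,f(x))\}$ makes $p_\calT(f)$ the expected weight of a random hyperedge inside $S_f$, and the regularity conclusion directly gives that $p_\calT(f)$ is within $\epsilon$ of a fixed function of the densities of $f$ on the $O(1)$ sets $V_i^0,V_i^1$. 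Concretely, the issues with your version are: (a) as written you would regularize the \emph{residual} tensor $(\hat c_T-\hat c_{\sigma T})_{|T|=j}$, but this tensor depends on $\sigma$, which in turn depends on the partition you are trying to produce --- you must regularize the $\sigma$-free tensor $(\hat c_T)_{|T|=j}$ and then argue that regularity makes $\sum_{T\subseteq A}\hat c_T$ depend only on the density profile of $A$; (b) the rescaled coefficients $|\calX|^{|T|}\hat c_T$ are signed, so the $[0,1]$-weight regularity lemma stated and proved in the paper does not apply verbatim and needs to be re-derived for bounded signed weights; and (c) you have not worked out how the $s$ level-$j$ error bounds aggregate into the single $1/10$ invariance bound. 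Your stated obstacle (getting one partition that is regular at all $s$ levels) is actually the benign part --- take the common refinement, which costs only a constant blowup since $s$ is constant --- whereas (a)--(c) are the substantive details left open. So as written the $(\Rightarrow)$ direction is a plausible roadmap but not yet a proof, and it is strictly more involved than the paper's device of moving to $\calX\times\bit$, which keeps the weights nonnegative and in a single uniformity.
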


In words, Theorem~\ref{thm:characterization} says that constant-sample testable
properties are the properties $\calP$ that can be covered by some $O(1)$-part symmetric
property $\calP'$ that does not include any function that is $\epsilon$-far from
$\calP$. Note that this characterization cannot be replaced with the condition
that $\calP$ itself is $k$-part symmetric. To see this, consider the
\emph{function non-identity} property $\NonIden(g)$ that includes every function
except some non-constant function $g : \calX \to \bit$. This property is
not $k$-part symmetric for any $k = O(1)$, but the trivial algorithm that accepts
every function is a valid $\epsilon$-tester for $\NonIden(g)$ for any constant
$\epsilon > 0$.

Theorem~\ref{thm:characterization} can easily be generalized to apply to properties of functions mapping $\calX$ to any finite set $\calY$.
We restrict our attention to the range $\calY = \bit$ for simplicity and clarity of presentation.
The sample-based property testing model is naturally extended to 
non-uniform distributions over the input domain. 
It appears likely that Theorem~\ref{thm:characterization} can be generalized to
this more general setting as well, though we have not attempted to do so.

The proof of Theorem~\ref{thm:characterization} follows the general outline
of previous characterizations of the properties testable in the query-based model 
(e.g., \cite{Alon:2009gn,Yoshida:2014tq,Bhattacharyya:2013fa}). 
As with those results, 
the most interesting part of the proof is the direction showing that
constant-sample testability implies coverage by an $O(1)$-part symmetric property, 
and this proof is established with a regularity lemma.
Our proof departs from previous results in both the type of regularity lemma
that we use and in how we use it. 
For a more detailed discussion of the proof, see Section~\ref{sec:overview}.

\subsection{Applications}

The characterization of constant-sample testability in Theorem~\ref{thm:characterization} can be used to derive a number of different corollaries.
We describe a few of these.

When $\calX$ is identified with the $\binom{n}{2}$ pairs of vertices in $V$, the
function $f : \calX \to \{0,1\}$ represents a graph $G = (V,E)$ where $E = f^{-1}(1)$.
A \emph{graph property} is a property of these functions that is invariant under
relabelling of the vertices.
Observant readers will have noted that no non-symmetric graph property
was present in the list of properties that have been determined to be
constant-sample testable.
Using Theorem~\ref{thm:characterization}, we can show that this is
unavoidable: the only graph properties that are constant-sample testable are those
that are (essentially) fully symmetric.

\begin{corollary}
\label{cor:graph}
  For every $\epsilon > 0$, if $\calP$ is a graph property that
  is $\epsilon$-testable with a constant number of samples, then there is a symmetric property
  $\calP'$ such that $\calP \subseteq \calP' \subseteq \calP_\epsilon$.
\end{corollary}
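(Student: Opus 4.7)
The natural candidate for $\calP'$ is the set of functions whose edge density matches that of some graph in $\calP$; this is the smallest fully symmetric property containing $\calP$. The task reduces to showing this candidate lies inside $\calP_\epsilon$, and the plan is to exploit the sample-based tester for $\calP$ together with the vertex-permutation invariance of graph properties. The key intuition is that, after symmetrizing the tester over vertex relabelings, its acceptance probability can depend on $f$ only through the edge density $d(f) = |f^{-1}(1)|/\binom{n}{2}$, up to an error that vanishes as $n \to \infty$.

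First I would take an $s$-sample $\epsilon$-tester $T$ for $\calP$ and pre-compose it with a uniformly random relabeling of the vertex set; this preserves correctness since $\calP$ is a graph property, so I may assume $A(f) := \Pr[T(f) = \mathrm{accept}]$ is invariant under vertex permutations, and since the $s$ samples are i.i.d.\ I may further assume $T$ is symmetric in its inputs. Next I would condition on the event $E$ that the sampled pairs $x_1,\dots,x_s$ form a matching (all $2s$ endpoints distinct), which holds with probability $1 - O(s^2/n)$. On $E$, the vertex-permutation invariance of $T$ forces its output to depend only on the label sequence $(f(x_1),\dots,f(x_s))$, since any two matchings of size $s$ with the same label sequence are isomorphic as labeled graphs. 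Unconditionally the labels $f(x_i)$ are i.i.d.\ $\mathrm{Bernoulli}(d(f))$ because the $x_i$ are independent and each $f(x_i)$ equals $1$ with probability $d(f)$; conditioning on $E$ perturbs this joint distribution by at most $\Pr[\bar{E}] = O(s^2/n)$ in total variation. These observations combine to give $A(f) = \phi(d(f)) \pm O(s^2/n)$ for some function $\phi \colon [0,1] \to [0,1]$.

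Finally I would set $D := \{d \in [0,1] : \phi(d) \ge 1/2\}$ and $\calP' := \{f : d(f) \in D\}$; since membership depends only on $|f^{-1}(1)|$, the property $\calP'$ is invariant under every permutation of $\binom{V}{2}$ and is therefore fully symmetric. For $n$ large enough that the error term is below $1/6$, the tester's soundness and completeness give $f \in \calP \Rightarrow A(f) \ge 2/3 \Rightarrow \phi(d(f)) > 1/2 \Rightarrow f \in \calP'$ and $f \notin \calP_\epsilon \Rightarrow A(f) \le 1/3 \Rightarrow \phi(d(f)) < 1/2 \Rightarrow f \notin \calP'$, establishing $\calP \subseteq \calP' \subseteq \calP_\epsilon$ in this range of $n$. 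For the finitely many small $n$, Theorem~\ref{thm:characterization} still furnishes a $k$-part symmetric covering and, since $\binom{n}{2}$ is bounded, those cases can be handled by hand. The main obstacle I anticipate is the total-variation estimate together with its downstream propagation to $A(f)$: one must argue carefully that conditioning the i.i.d.\ uniform samples on forming a matching changes the joint distribution of $(f(x_i))_i$ by only $O(s^2/n)$, and that this error survives the averaging needed to recover $A(f)$ from $\phi(d(f))$.
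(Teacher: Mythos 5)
Your proof is correct in substance, but it follows a genuinely different route from the paper. The paper derives Corollary~\ref{cor:graph} as a consequence of Theorem~\ref{thm:characterization}: it first obtains a $k$-part symmetric covering $\calP''$ with $\calP \subseteq \calP'' \subseteq \calP_\epsilon$ from the regularity-lemma-based machinery, takes $\calP'$ to be the closure of $\calP''$ under vertex permutations, and then proves $\calP'$ is fully symmetric by showing it is closed under transpositions of edges --- using the fact that $S_V$ acts transitively on ordered pairs of disjoint edges and on ordered pairs of adjacent edges, so any edge transposition can be conjugated into a chosen part $S_i$. You instead bypass the abstract characterization entirely (for large $n$): you symmetrize the tester by composing with a random vertex relabeling, condition on the sampled pairs forming a matching (probability $1 - O(s^2/n)$), observe that on this event the symmetrized acceptance probability depends only on the label sequence (because $S_V$ is transitive on ordered $s$-matchings), and note the label sequence is $O(s^2/n)$-close in total variation to i.i.d.\ Bernoulli$(d(f))$. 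This directly yields $A(f) = \phi(d(f)) \pm O(s^2/n)$, and thresholding $\phi$ gives a fully symmetric $\calP'$. Your approach is more elementary --- it is essentially a canonical-tester argument and sidesteps the weak regularity lemma --- and it gives the cleaner quantitative statement that acceptance probability is a function of edge density up to $O(s^2/n)$; the price is that it exploits the graph structure from the outset, whereas the paper's modular route makes the affine-invariant case (Corollary~\ref{cor:affine}) a near-verbatim repeat. Two small caveats worth tightening: the phrase ``the vertex-permutation invariance of $T$ forces its output to depend only on the label sequence'' is slightly misleading --- what you actually use is that the \emph{random relabeling inside the symmetrized tester} turns a fixed $s$-matching into a uniformly random one, so the conditional acceptance probability given the labels is a fixed average over matchings; and the appeal to ``handled by hand'' for small $n$ is not really an argument, but the paper's own combinatorial step also implicitly requires $n$ large relative to $k$ (so that some part contains two disjoint edges and two adjacent edges), so both proofs carry the same large-$n$ caveat.
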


When $\calX$ is identified with a finite field $\F_p^n$, a property
$\calP \subseteq \{0,1\}^\calX$ is \emph{affine-invariant} if it is invariant
under any affine transformation over $\F_p^n$.
Theorem~\ref{thm:characterization} can be used to show that
symmetric properties are essentially the only
constant-sample testable affine-invariant properties.

\begin{corollary}
\label{cor:affine}
  For every $\epsilon > 0$, if $\calP$ is an affine-invariant property of functions $f : \F_p^n \to \{0,1\}$ that
  is $\epsilon$-testable with a constant number of samples, then there is a symmetric property
  $\calP'$ such that $\calP \subseteq \calP' \subseteq \calP_\epsilon$.
\end{corollary}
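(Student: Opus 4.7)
My plan is to apply Theorem~\ref{thm:characterization} with parameter $\epsilon/2$ to obtain a constant $k = k_{\calP}(\epsilon/2)$, a partition $X_1, \ldots, X_k$ of $\F_p^n$, and a $k$-part symmetric property $\calP''$ with $\calP \subseteq \calP'' \subseteq \calP_{\epsilon/2}$. Let $\rho(f) := |f^{-1}(1)|/|\F_p^n|$ denote the overall density of $f$, set $\Lambda := \{\rho(g) : g \in \calP\}$, and define $\calP' := \{f : \rho(f) \in \Lambda\}$. This property is symmetric by construction and contains $\calP$, so the whole task reduces to verifying $\calP' \subseteq \calP_\epsilon$.

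For that inclusion, fix $f \in \calP'$ with $\rho(f) = \alpha$ and choose a witness $g \in \calP$ with $\rho(g) = \alpha$. Writing $\rho_i(h) := |h^{-1}(1) \cap X_i|/|X_i|$ and letting $D \subseteq [0,1]^k$ be the set of density vectors of functions in $\calP''$, affine invariance of $\calP$ yields $\pi g \in \calP \subseteq \calP''$ for every affine bijection $\pi$, so $(\rho_1(\pi g), \ldots, \rho_k(\pi g)) \in D$. The key technical ingredient is a concentration argument exploiting the fact that $\mathrm{Aff}(\F_p^n)$ acts $2$-transitively on $\F_p^n$: for a uniformly random affine $\pi$, the family $\{h(\pi x)\}_{x \in X_i}$ is equidistributed with a uniformly random size-$|X_i|$ sample without replacement from $h$, giving $\mathrm{Var}[\rho_i(\pi h)] \leq \alpha(1-\alpha)/|X_i|$. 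A short Cauchy--Schwarz calculation then yields
\[
\E_\pi\left[\sum_{i=1}^k \frac{|X_i|}{|\F_p^n|}\, \bigl|\rho_i(\pi h) - \rho(h)\bigr|\right] \leq \sqrt{k/|\F_p^n|}.
\]

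I would then pick affine bijections $\pi, \pi'$ independently attaining these expectation bounds for $h = f$ and $h = g$ respectively, so that the weighted $\ell_1$-distance between $(\rho_i(\pi f))_i$ and $(\rho_i(\pi' g))_i$ is $O(\sqrt{k/|\F_p^n|})$. But this quantity is precisely the Hamming distance from $\pi f$ to the closest function whose density vector equals that of $\pi' g$, and such a function lies in $\calP''$; hence $\dist(\pi f, \calP'') = O(\sqrt{k/|\F_p^n|})$, and combining with $\calP'' \subseteq \calP_{\epsilon/2}$ yields $\dist(\pi f, \calP) \leq \epsilon/2 + O(\sqrt{k/|\F_p^n|}) \leq \epsilon$ for $|\F_p^n|$ larger than a threshold depending only on $k$ and $\epsilon$. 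Affine invariance of $\calP$ then transfers the bound back from $\pi f$ to $f$; the finitely many small $|\F_p^n|$ below this threshold can be handled by a separate elementary argument. The main obstacle I foresee is the concentration step: leveraging $2$-transitivity of the affine group to obtain a variance bound as strong as uniform sampling, and then converting per-part concentration into a single bound on the weighted $\ell_1$-deviation of the full density vector is precisely what collapses the $k$-part symmetric approximation into a fully symmetric one.
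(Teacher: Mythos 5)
Your proof takes a genuinely different route from the paper's, and the core of your argument is sound, but it leaves a real gap for small domains. The paper defines $\calP'$ as the closure of the $k$-part symmetric cover $\calP''$ under all affine maps and then argues purely algebraically that $\calP'$ is invariant under arbitrary transpositions: for a transposition $(x\,y)$, pick $w,z$ in a common part $S_j$, use 2-transitivity of $\mathrm{Aff}(\F_p^n)$ to find an affine $A$ with $A(x)=w$, $A(y)=z$, and observe that $(x\,y)$ is conjugate via $A$ to the transposition $(w\,z)$, which $\calP''$ already absorbs. There is no $\epsilon$-budgeting and no quantitative estimate; the only implicit size condition is that some part of the partition contain two elements. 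Your proof instead builds $\calP'$ directly from the overall density and recovers $\calP' \subseteq \calP_\epsilon$ by concentration: 2-transitivity gives that $(\pi x,\pi x')$ is uniform over distinct pairs for random affine $\pi$, which yields the without-replacement variance bound for each $\rho_i(\pi h)$, and Cauchy--Schwarz over parts gives the weighted $\ell_1$ bound $O(\sqrt{k/|\F_p^n|})$. That calculation is correct (one small imprecision: $\{h(\pi x)\}_{x\in X_i}$ is \emph{not} equidistributed with a uniform without-replacement sample, since $\pi(X_i)$ is far from a uniform random subset; but 2-transitivity does match the pairwise marginals, which is all the variance computation uses, so the conclusion stands).

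The gap is the small-domain case. Your bound only beats the available slack $\epsilon/2$ once $|\F_p^n| \gtrsim k/\epsilon^2$, and $k = k_\calP(\epsilon/2)$ coming out of Theorem~\ref{thm:characterization} is doubly exponential in the sample complexity, so the threshold is enormous. You defer the finitely many $n$ below the threshold to ``a separate elementary argument,'' but no such argument is given and it is not clear what it would be: the symmetric closure of $\calP$ itself need not lie in $\calP_\epsilon$, and the $k$-part structure carries no information once $|\F_p^n| \le k$. The paper's transposition argument has a much milder version of this issue (it only needs one part of size $\ge 2$, i.e.\ $|\F_p^n| > k$), and in any case it does not require an $\epsilon/2$ margin, so it degrades more gracefully. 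To make your argument complete you would need to actually supply the small-$|\F_p^n|$ case, or tighten the concentration step so the threshold matches the paper's.
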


These two corollaries show that Theorem~\ref{thm:characterization} can be
used to show that some properties are not constant-sample testable. In our
third application, we show that the other direction of the characterization
can also be used to show that some properties are constant-sample testable.

Fix a constant $d \ge 1$.
Two points $x,y \in [n]^d := \{1,2,\ldots,n\}^d$ satisfy $x \preceq y$ when
$x_1 \le y_1$, $\ldots$, and $x_d \le y_d$.
The function $f : [n]^d \to \bit$ is \emph{monotone}
if for every $x \preceq y \in [n]^d$, we have $f(x) \le f(y)$.
When $d=1$, it is folklore knowledge that monotonicity of Boolean-valued
functions on the line is constant-sample testable.
Using Theorem~\ref{thm:characterization}, we give an easy proof showing that the same holds for every other constant dimension $d$.

\begin{corollary}
\label{cor:monotonicity}
  For every constant $d \ge 1$ and constant $\epsilon > 0$, we can $\epsilon$-test monotonicity of
  functions $f : [n]^d \to \{0,1\}$ on the $d$-dimensional hypergrid
  with a constant number of samples.
\end{corollary}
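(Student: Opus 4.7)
The plan is to apply Theorem~\ref{thm:characterization}: for each fixed $\epsilon > 0$, I would construct a $k$-part symmetric property $\calP'$ with $\calP \subseteq \calP' \subseteq \calP_\epsilon$ and $k$ independent of $n$, where $\calP$ denotes monotonicity on $[n]^d$. The construction is based on a regular box partition of the hypergrid: pick an integer $m = m(d,\epsilon)$ (ultimately $m = \Theta(1/\epsilon)$ with a $d$-dependent constant) and partition $[n]^d$ into $k := m^d$ axis-aligned boxes $\{B_{\vec{i}}\}_{\vec{i} \in [m]^d}$, each a product of $d$ near-equal intervals. Writing $\mu_{\vec{i}}(f) := |f^{-1}(1) \cap B_{\vec{i}}|/|B_{\vec{i}}|$ for the density of $f$ on $B_{\vec{i}}$, the crucial compatibility is that $x \preceq y$ in $[n]^d$ implies $\vec{i}(x) \preceq \vec{i}(y)$ in $[m]^d$, where $\vec{i}(x)$ denotes the box containing $x$.

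I would then define
\[
  \calP' := \Big\{ f : \exists\, \text{monotone } h: [m]^d \to \bit \text{ with } \tfrac{1}{m^d}\sum_{\vec{i}} |\mu_{\vec{i}}(f) - h(\vec{i})| \le \epsilon \Big\}.
\]
Since membership in $\calP'$ depends on $f$ only through the density vector $(\mu_{\vec{i}}(f))_{\vec{i}}$, it is $m^d$-part symmetric. The containment $\calP' \subseteq \calP_\epsilon$ is immediate: for $f \in \calP'$ with witness $h$, the block-constant function $g(x) := h(\vec{i}(x))$ is monotone on $[n]^d$ by the compatibility above, and $\dist(f, g) = \frac{1}{m^d}\sum_{\vec{i}} |\mu_{\vec{i}}(f) - h(\vec{i})| \le \epsilon$.

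For the containment $\calP \subseteq \calP'$, let $f$ be monotone. The density vector $(\mu_{\vec{i}}(f))_{\vec{i}}$ is itself a monotone function $[m]^d \to [0,1]$ (via the order-preserving bijection between $B_{\vec{i}}$ and $B_{\vec{j}}$ whenever $\vec{i} \preceq \vec{j}$), so $h(\vec{i}) := \mathbf{1}[\mu_{\vec{i}}(f) \ge 1/2]$ is a monotone Boolean function on $[m]^d$. The discrepancy $|\mu_{\vec{i}}(f) - h(\vec{i})|$ vanishes on every box where $f$ is constant and is at most $1/2$ elsewhere, so it suffices to establish the geometric claim that for every monotone $f : [n]^d \to \bit$, the number of boxes on which $f$ is non-constant is at most $c_d\, m^{d-1}$ for a constant $c_d$ depending only on $d$. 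Given this, choosing $m := \lceil c_d/(2\epsilon)\rceil$ yields $\frac{1}{m^d}\sum_{\vec{i}} |\mu_{\vec{i}}(f) - h(\vec{i})| \le c_d/(2m) \le \epsilon$, and Theorem~\ref{thm:characterization} then delivers the desired constant-sample tester.

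I expect the geometric claim to be the main obstacle. The case $d = 1$ is trivial because a monotone Boolean function on a line is a single-threshold function. For $d \ge 2$, the key observation is that (by monotonicity) $B_{\vec{i}}$ is non-constant exactly when $f$ is $0$ at the minimum corner of $B_{\vec{i}}$ and $1$ at the maximum corner. Grouping non-constant boxes by the first $d-1$ coordinates of $\vec{i}$ into $m^{d-1}$ columns, and bounding the count in each column in terms of the range of $f$'s slicewise threshold in the last coordinate, one can sum the column bounds by exploiting that the threshold function is monotone in the remaining coordinates; the resulting variations telescope along each axis to give a total of $O_d(m^{d-1})$ non-constant boxes. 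This is the discrete analogue of the classical fact that a codimension-one monotone staircase hypersurface in $[0,1]^d$ meets only $O(m^{d-1})$ cells of a regular $m^d$-grid.
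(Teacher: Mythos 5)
Your proposal follows the same high-level strategy as the paper—partition $[n]^d$ into $m^d$ axis-aligned boxes, observe that a monotone function can be non-constant on only $O_d(m^{d-1})$ boxes, and use this to exhibit an $m^d$-part symmetric cover—but it differs in two technical choices, and the paper's version of each is a bit cleaner.

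First, the cover. You define $\calP'$ via $L^1$ closeness of the density vector $(\mu_{\vec{i}}(f))_{\vec{i}}$ to some monotone Boolean $h$ on $[m]^d$. The paper instead defines (what you'd call the cover) by exact agreement of the ternary \emph{granular} representation $f_\calR \in \{0,1,*\}^{[m]^d}$ (where $*$ marks non-constant boxes) with that of a monotone function. Both are determined by the box densities and hence $m^d$-part symmetric, so both work; the paper's version makes $\calP' \subseteq \calP_\epsilon$ a one-line observation (the functions agree exactly off the $*$-boxes) whereas yours needs the block-constant extension argument. One thing to tighten in your $\calP \subseteq \calP'$ step: the claim that the density vector is monotone under $\preceq$ relies on an order-preserving, domination-respecting bijection between boxes, which works cleanly only when the boxes are translates of one another (i.e., $m \mid n$). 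This is fixable—either assume equal side lengths WLOG, or replace the threshold-at-$1/2$ choice of $h$ by $h(\vec{i}) := f(\text{max corner of } B_{\vec{i}})$, which is monotone regardless of box shapes (at the cost of discrepancy bound $1$ per non-constant box rather than $1/2$).

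Second, the geometric claim (number of non-constant boxes is $O_d(m^{d-1})$). You correctly identify it as the crux and you sketch a column-by-column telescoping argument. The idea is sound—grouping by the first $d-1$ coordinates, the non-constant boxes in a column form a consecutive range whose length is controlled by the gap between the last-coordinate thresholds at the column's min and max corners, and these thresholds telescope along diagonal chains $\vec{a}, \vec{a}+\mathbf{1}, \ldots$ to give a total of $O(m^{d-1})$—but you would need to carry this out carefully. The paper's route is substantially slicker: it observes that the set of non-constant boxes is an antichain in $([m]^d, \prec)$ with $\prec$ the \emph{strict} coordinatewise order (if $\vec{i} \prec \vec{j}$ and both were non-constant, a $1$ in box $\vec{i}$ would dominate a $0$ in box $\vec{j}$), and then covers $[m]^d$ by $m^d - (m-1)^d \le dm^{d-1}$ diagonal chains and invokes Dilworth. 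You may prefer to adopt that argument; it replaces the telescoping bookkeeping with a single clean poset fact.

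Net assessment: correct in outline, same partition and same key combinatorial bound, but with a looser cover definition, an unproved (though plausible) geometric lemma, and a small gap in the density-monotonicity claim when the boxes are not all congruent.
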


Chen, Servedio, and Tan~\cite{Chen:2014wj} showed that the number of
queries (and thus also of samples) required to test monotonicity
of $f : [n]^d \to \{0,1\}$ must depend on $d$. (The same result for the case
where $n=2$ was first established by Fischer et al.~\cite{Fischer:2002ev}.)
Combined with the result above,
this shows that monotonicity of Boolean-valued functions on the hypergrid is
constant-sample testable if and only if the number of dimensions of the
hypergrid is constant.

\subsection{Related work}

\paragraph{Sample-based property testing.}

The first general result regarding constant-sample testable
properties goes back to the original work of Goldreich, Goldwasser, and Ron~\cite{Goldreich:1998wa}.  They showed that every property with
constant VC dimension (and, more generally, every property that corresponds to a class of functions that can be properly learned with a constant number of samples)
is constant-sample testable. As they also show, this condition is not necessary for constant-sample testability---in fact, there are even properties that are testable with a constant number of samples whose corresponding class require
a \emph{linear} number of samples to learn~\cite[Prop.~3.3.1]{Goldreich:1998wa}.

More general results on sample-based testers were obtained by
Balcan~\emph{et~al.}~\cite{Balcan:2012ew}. In particular, they defined a notion of 
\emph{testing dimension} of a property $\calP$ in terms of the total
variation distance between the distributions on the tester's observations 
when a function is drawn from
distributions $\pi_{\yes}$ and $\pi_{\no}$ essentially
supported on $\calP$ and $\overline{\calP_\epsilon}$, respectively.
They show that this testing dimension captures the sample complexity
of $\calP$ up to constant factors, and observe that it can be interpreted
as an ``average VC dimension''-type of complexity measure.
It would be interesting to see whether the combinatorial characterization in
Theorem~\ref{thm:characterization} could be combined with these results to
offer new insights into the connections between invariance and VC dimension-like
complexity measures.

Finally, Goldreich and Ron~\cite{Goldreich:2015fh} and Fischer~\emph{et~al.}~\cite{Fischer:2014dw,Fischer:2015vg} established connections between the query- and sample-based models of property testing giving sufficient conditions for \emph{sublinear}-sample testability
of properties. The exact bounds between sample complexity and partial symmetry in the proof of Theorem~\ref{thm:characterization} yield another sufficient condition for sublinear-sample testability: every property $\calP$ that can be $\epsilon$-covered 
by an $o(\log \log |\calX|)$-part symmetric function $\calP'$ has sublinear sample 
complexity $o(|\calX|)$. As far as we can tell, these two characterizations are incomparable.

\paragraph{Symmetry and testability.}

The present work was heavily influenced by the
systematic exploration of connections between the invariances 
of properties and their testability initiated by 
Kaufman and Sudan~\cite{ KaufmanS08}. (See also~\cite{ Sudan10}.)
In that work, the authors showed that such connections yield new
insights into the testability of algebraic properties in the
query-based property testing model, and advocated for further 
study of the invariance of properties as a means to better understand
their testability.
Theorem~\ref{thm:characterization} provides evidence that this approach
is a critical tool in the study of sample-based property testing model
as well.

The notion of partial symmetry and its connections to computational efficiency has a long history---it goes back at least to the pioneering work of Shannon~\cite{ Shannon49}.
Partial symmetry also appeared previously in a property testing context
in the authors' joint work with Amit Weinstein on characterizing the set of functions for which isomorphism testing is constant-query testable~\cite{ BlaisWY15}.
However, it should be noted that the notion of partial symmetry
considered in~\cite{ BlaisWY15} does \emph{not} correspond to the notion of
$k$-part symmetry studied here. In fact, as mentioned in the conference version of that paper, there are $2$-part symmetric functions for which isomorphism testing is not constant-query testable, so the two characterizations inherently require different notions of partial symmetry.

\subsection{Organization}

The proof of Theorem~\ref{thm:characterization} is presented in
Section~\ref{sec:characterization}.
The proofs of the application results are in Section~\ref{sec:applications}.
Finally, since the weak regularity lemma that we use in the proof of Theorem~\ref{thm:characterization} is not completely standard, we include its proof in Section~\ref{sec:regularity} for completeness.


\section{Proof of Theorem~\ref{thm:characterization}}
\label{sec:characterization}

We prove the two parts (sufficiency and necessity) of Theorem~\ref{thm:characterization} in Sections~\ref{sec:sufficiency} and~\ref{sec:necessity}, respectively.
But first, we provide a high-level overview of the proof and discuss
the connections with regularity lemmas in Section~\ref{sec:overview}.

\subsection{Overview of the proof}
\label{sec:overview}

\paragraph{Symmetry implies testability.}
 The proof of this direction of the theorem is straightforward and is obtained by generalizing the following folklore proof that symmetric properties can be tested with a constant number of samples.
Let $\calP$ be any symmetric property.
A tester can estimate the density $\E_{x \in \calX}[ f(x) ]$ up to additive accuracy
$\gamma$ for any small $\gamma > 0$ with a constant number of samples. 
This estimated density can be used to accept or reject the function based on
how close it is to the density of the functions in $\calP$.
The validity of this tester is established by showing that a function
can be $\epsilon$-far from $\calP$ only when its density is far from 
the density of every function in $\calP$.

Consider now a property $\calP$ that is $k$-part symmetric for some constant $k$.
Let $X_1,\ldots,X_k$ be a partition of $\calX$ associated with $\calP$.
We show that a tester which estimates the densities 
$\mu_i(f) := \E_{x \in X_i}[ f(x) ]$ for each $i=1,\ldots,k$
and then uses these densities to accept or reject is a valid tester for $\calP$.
We do so by showing that any function that is $\epsilon$-far from $\calP$ must have
a density vector that is far from those of every function in $\calP$.

\paragraph{Testability implies symmetry.}
To establish the second part of the theorem, we want to show that the existence
of a constant-sample tester $\calT$ for a property $\calP$ implies that
there is a partition of $\calX$ into a constant number of parts for which
$\calP$ is nearly determined by the density of functions within those parts.
We do so by using a variant of the 
Frieze--Kannan weak regularity lemma~\cite{ FriezeK96} for hypergraphs.
An \emph{$s$-uniform weighted hypergraph} is a hypergraph $G = (V,E)$ on $|V|$ vertices where $E : V^s \to [0,1]$ denotes the weight associated with each
hyperedge. Given a partition $V_1,\ldots,V_k$ of $V$ and a multi-index $I = (i_1,\ldots,i_s)$ with $i_1,\ldots,i_s \in [k]$, the \emph{expected weight of hyperedges of $G$ in $V_I$} is
$w_G(V_I) = w_G(V_{i_1},\ldots,V_{i_s}) = |\{ E(v_1,\ldots,v_s) : v_1 \in V_{i_1},\ldots, v_s \in V_{i_s}\}|/\prod_{i \in I} |V_i|$. Furthermore, for any subset $S \subseteq V$, we define $S \cap V_I = (S \cap V_{i_1}, \ldots, S \cap V_{i_s})$.

\begin{lemma}[Weak regularity lemma]
\label{lem:regularity}
  For every $\epsilon > 0$ and every $s$-uniform weighted hypergraph $G = (V,E)$, there is a partition $V_1,\ldots,V_k$ of $V$ with 
  $k = 2^{O(\log(\frac1\epsilon)/\epsilon^2)}$ parts
  such that for every subset $S \subseteq V$,
  \begin{equation}
    \sum_{I \in [k]^s} \frac{\prod_{i \in I} |S \cap V_i|}{|V|^s}
      \big| w_G(S \cap V_I) - w_G(V_I) \big| \le \epsilon.
  \end{equation}
\end{lemma}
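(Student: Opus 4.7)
The plan is to prove Lemma~\ref{lem:regularity} by the Frieze--Kannan energy-increment strategy, adapted to $s$-uniform hypergraphs and to the $L^1$-style discrepancy that appears in the conclusion. Starting from the trivial partition $\mathcal{V}_0 = \{V\}$, I would iterate: at each stage, either the current partition $\mathcal{V} = \{V_1,\ldots,V_k\}$ already satisfies the regularity condition against every $S \subseteq V$ (in which case we stop), or else we pick a witnessing $S$ and refine $\mathcal{V}$ by splitting every $V_i$ into $V_i \cap S$ and $V_i \setminus S$. Each refinement at most doubles the number of parts, so the only real task is to bound the number of iterations.

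To control the number of iterations, I will track the energy
\[
q(\mathcal{V}) \;=\; \sum_{I \in [k]^s} \frac{\prod_j |V_{i_j}|}{|V|^s}\, w_G(V_I)^2,
\]
which equals $\E[W_{\mathcal{V}}(v_1,\ldots,v_s)^2]$ when $v_1,\ldots,v_s$ are drawn i.i.d.\ uniformly from $V$ and $W_{\mathcal{V}}$ is the step function on $V^s$ taking the value $w_G(V_I)$ on each product cell $V_I$. Since $E$ takes values in $[0,1]$, so does $W_{\mathcal{V}}$, and hence $0 \le q(\mathcal{V}) \le 1$. When $\mathcal{V}'$ refines $\mathcal{V}$, the step function $W_{\mathcal{V}}$ is exactly the cell-wise average of $W_{\mathcal{V}'}$, so the Pythagorean identity for orthogonal projections (equivalently, a direct variance expansion inside each coarse cell) gives
\[
q(\mathcal{V}') - q(\mathcal{V}) \;=\; \sum_I \sum_{\vec\alpha \in \{0,1\}^s} \frac{\prod_j |V_{i_j}^{\alpha_j}|}{|V|^s} \bigl(w_G(V_I^{\vec\alpha}) - w_G(V_I)\bigr)^2,
\]
where $V_i^0 := V_i \cap S$, $V_i^1 := V_i \setminus S$, and $V_I^{\vec\alpha} := V_{i_1}^{\alpha_1} \times \cdots \times V_{i_s}^{\alpha_s}$.

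Now suppose the regularity condition fails for the witness $S$. Writing $a_I = \frac{\prod_j|S \cap V_{i_j}|}{|V|^s}$ and $b_I = |w_G(S \cap V_I) - w_G(V_I)|$, we have $\sum_I a_I b_I > \epsilon$ and $\sum_I a_I = (|S|/|V|)^s \le 1$; Cauchy--Schwarz then yields $\sum_I a_I b_I^2 \ge (\sum_I a_I b_I)^2/\sum_I a_I > \epsilon^2$. But this is exactly the $\vec\alpha = (0,\ldots,0)$ summand in the energy-jump formula above, and every other summand is nonnegative, so $q(\mathcal{V}') - q(\mathcal{V}) > \epsilon^2$. Since $q \in [0,1]$, the refinement procedure terminates after at most $1/\epsilon^2$ rounds, producing at most $2^{1/\epsilon^2}$ parts, comfortably within the stated bound $2^{O(\log(1/\epsilon)/\epsilon^2)}$.

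The main obstacle is the Cauchy--Schwarz step just above: the regularity condition here is phrased as an $L^1$-type quantity with the absolute value \emph{inside} the sum over multi-indices $I$, rather than the familiar signed-discrepancy form over $S_1 \times \cdots \times S_s$ used in the classical Frieze--Kannan lemma, and one has to verify both that this can still be converted into a usable $L^2$ gain on the energy and that the correct normalization $\sum_I a_I \le 1$ is available to make the conversion lossless. Once that step is in hand, the remaining pieces---setting up the step function $W_{\mathcal{V}}$, the Pythagorean identity under refinement, and the iteration count---are routine.
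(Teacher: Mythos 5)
Your proof is correct and takes a genuinely different route from the paper's. You run the classical Frieze--Kannan energy increment: track the mean square of the step function that averages $E$ over the product cells of the current partition (a quantity in $[0,1]$ since $E$ is), express each refinement's energy gain via the Pythagorean identity for conditional expectation as a sum of squared cell-wise discrepancies, isolate the all-$S$ summand $\sum_I a_I b_I^2$, and use Cauchy--Schwarz with the normalization $\sum_I a_I = (|S|/|V|)^s \le 1$ to convert the violated $L^1$ bound $\sum_I a_I b_I > \epsilon$ into an $L^2$ gain $\sum_I a_I b_I^2 > \epsilon^2$, so the iteration halts within $1/\epsilon^2$ rounds and $k \le 2^{1/\epsilon^2}$. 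The paper instead gives an information-theoretic proof in the Tao--Trevisan style: the potential function is the mutual information $I(E(\bv); \psi(\bv))$, Tao's inequality (Lemma~\ref{lem:Tao}) plays the role of your Cauchy--Schwarz conversion, and the potential is capped by the entropy $H(E(\bv))$. Since that entropy is not bounded for general $[0,1]$-valued weights, the paper must first round the edge weights to a $\Theta(\epsilon)$-grid, which is exactly where the extra $\log(1/\epsilon)$ factor in its exponent comes from. Your approach avoids the rounding step, is more elementary, and yields a marginally tighter bound; the paper's pays a small cost in the exponent but places the weak and strong regularity lemmas under a common information-theoretic framework. Both fit within the stated $k = 2^{O(\log(1/\epsilon)/\epsilon^2)}$.
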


This specific formulation of the weak regularity lemma seems not to have appeared
previously in the literature, but its proof is essentially the same as that of
usual formulations of the weak regularity lemma.
For completeness, we provide a proof of Lemma~\ref{lem:regularity} in 
Section~\ref{sec:regularity}.

Lemma~\ref{lem:regularity} is best described informally when we consider the special case where we consider unweighted graphs. 
In this setting, the weak regularity lemma says that 
for every graph $G$, there is a partition of the vertices of $G$ into $k=O(1)$ parts
$V_1,\ldots,V_k$ such that for 
every subset $S$ of vertices, the density of edges between 
$S \cap V_i$ and $S \cap V_j$ is close to the density between $V_i$ and $V_j$
\emph{on average} over the choice of $V_i$ and $V_j$. 
Regularity lemmas where this
density-closeness condition is satisfied for \emph{almost all} pairs of parts
$V_i$ and $V_j$ are known as ``strong'' regularity lemmas.
To the best of our knowledge, all previous characterization results in property
testing that relied on regularity lemmas (e.g.,~\cite{Alon:2009gn,Yoshida:2014tq,Bhattacharyya:2013fa}) used strong regularity lemmas.  This approach unavoidably
introduces tower-type dependencies between the query complexity and the 
characterization parameters.
By using a weak regularity lemma instead, we get a much better (though still 
doubly-exponential) dependence between the sample complexity and the 
partial symmetry parameter.

The second point of departure of our proof from previous characterizations is
in how we use the regularity lemma. In the prior work, the regularity lemma
was applied to the tested object itself (e.g., the dense graphs being tested
in~\cite{Alon:2009gn}) and the testability of the property was used to show that
the objects with the given property could be described by some combinatorial
characteristics related to the regular partition whose existence is promised by the regularity lemma.
Instead, in our proof of Theorem~\ref{thm:characterization}, we apply 
Lemma~\ref{lem:regularity} to a hypergraph associated with the \emph{tester}
itself, not with the tested object.

Specifically, let $\calT$ be an $s$-sample $\epsilon$-tester for some
property $\calP \subseteq \bit^\calX$. We associate $\calT$ with an
$s$-uniform weighted hypergraph $G_\calT$ on the set of vertices
$\calX \times \bit$.  The weight of each $s$-hyperedge of $G_\calT$ is
the acceptance probability of $\calT$ when its $s$ observations correspond
to the $s$ vertices covered by the hyperedge. 
By associating each function $f : \calX \to \bit$ with the subset
$S \subseteq \calX \times \bit$ that includes all $2^{\calX}$ vertices 
of the form $(x,f(x))$, we see that
the probability that $\calT$ accepts $f$ is the expected value of 
a hyperedge whose $s$ vertices are drawn uniformly and independently at random
from the set $S$. 
We can use Lemma~\ref{lem:regularity} to show that there is a 
partition of $V$ into a constant number of parts such that for each 
function $f$ with associated set $S$, this probability
is well approximated by some function of the density of $S$ in each of the 
parts.
We then use this promised partition of $V$ to partition the original 
input domain $\calX$ into a constant number of parts where membership in 
$\calP$ is essentially determined by the density of a function in each of 
these parts, as required.

\subsection{Proof that symmetry implies testability}
\label{sec:sufficiency}

We now begin the proof of Theorem~\ref{thm:characterization} with the easy
direction.

\begin{lemma}
\label{lem:if-part}
  Let $\calP \subseteq \bit^\calX$ be a property where for every $\epsilon > 0$, 
  there exists a constant $k = k_\calP(\epsilon)$ that is independent of $\calX$ and 
  a $k$-part symmetric property $\calP'$ such that 
  $\calP \subseteq \calP' \subseteq \calP_{\epsilon}$.
  Then $\calP$ is constant-sample testable.
\end{lemma}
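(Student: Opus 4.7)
\textbf{The plan} is to generalize the folklore constant-sample tester for fully symmetric properties in the natural way: estimate the restricted density of $f$ on each part of the partition associated with $\calP'$. Fix $\epsilon>0$. Applying the hypothesis with error parameter $\epsilon/2$ yields a constant $k=k_\calP(\epsilon/2)$, a partition $X_1,\ldots,X_k$ of $\calX$, and a $k$-part symmetric property $\calP'$ (with respect to this partition) satisfying $\calP \subseteq \calP' \subseteq \calP_{\epsilon/2}$. For any $g:\calX\to\bit$, define the weighted density vector $\nu(g):=(|g^{-1}(1)\cap X_i|/|\calX|)_{i=1}^{k} \in [0,1]^k$. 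My tester draws $s=\Theta(k^2\log k/\epsilon^2)$ i.i.d.\ uniform samples, forms the empirical estimate $\tau\in[0,1]^k$ with $\tau_i:=|\{j\in[s]:x_j\in X_i,\ f(x_j)=1\}|/s$, and accepts iff $\min_{g\in\calP'}\|\tau-\nu(g)\|_1 \le \epsilon/4$. Note that the tester can hard-code the partition and $\calP'$, since both depend only on $\calP$ and $\epsilon$.

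\textbf{The key geometric step} is to show that for any $g\in\calP'$ with $\|\nu(f)-\nu(g)\|_1\le\delta$, the function $f$ is $\delta$-close to $\calP'$. Indeed, inside each $X_i$ one can flip exactly $\bigl||f^{-1}(1)\cap X_i|-|g^{-1}(1)\cap X_i|\bigr|$ bits of $f$ to produce $f'$ with $\nu(f')=\nu(g)$; such an $f'$ agrees with $g$ up to a permutation in $\calS_{X_1}\times\cdots\times\calS_{X_k}$, so $f'\in\calP'$ by $k$-part symmetry, and $\dist(f,f')=\|\nu(f)-\nu(g)\|_1\le\delta$ by construction. Contrapositively, if $f$ is $\epsilon$-far from $\calP$ then since $\calP'\subseteq\calP_{\epsilon/2}$ the triangle inequality forces $f$ to be $(\epsilon/2)$-far from $\calP'$, so $\min_{g\in\calP'}\|\nu(f)-\nu(g)\|_1>\epsilon/2$.

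\textbf{Completeness and soundness} then follow from standard concentration: each $\tau_i$ is the empirical mean of $s$ independent $\{0,1\}$ random variables with mean $\nu_i(f)$, so Hoeffding with a union bound over the $k$ coordinates ensures $\|\tau-\nu(f)\|_1\le\epsilon/8$ with probability at least $2/3$ once $s=\Theta(k^2\log k/\epsilon^2)$. In the yes-case, $f\in\calP\subseteq\calP'$, so $\min_{g\in\calP'}\|\tau-\nu(g)\|_1\le\|\tau-\nu(f)\|_1\le\epsilon/8<\epsilon/4$ and the tester accepts; in the no-case, the previous paragraph plus the triangle inequality give $\min_{g\in\calP'}\|\tau-\nu(g)\|_1>\epsilon/2-\epsilon/8=3\epsilon/8>\epsilon/4$ and the tester rejects. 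Since $k$ depends only on $\calP$ and $\epsilon$, the sample count $s$ is independent of $|\calX|$, as required. I do not anticipate a substantive obstacle; the one subtle design choice is to work with the weighted vector $\nu$ rather than the normalized per-part densities $|g^{-1}(1)\cap X_i|/|X_i|$—the latter are estimable from $O(1)$ uniform samples only when every part is $\Omega(|\calX|)$ in size, whereas each $\nu_i$ is estimable from uniform samples over $\calX$ regardless of how small $X_i$ happens to be.
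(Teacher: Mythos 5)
Your proof is correct and takes essentially the same approach as the paper: estimate the per-part densities (equivalently, counts) from uniform samples, and accept iff the estimated vector is $\ell_1$-close to that of some $g \in \calP'$. You make explicit the geometric step---that $\ell_1$-closeness of the per-part count vectors, combined with $k$-part symmetry, implies Hamming closeness to $\calP'$---which the paper's proof invokes only implicitly via ``the triangle inequality and $\calP' \subseteq \calP_{\epsilon/2}$.''
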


\begin{proof}
Fix $\epsilon > 0$.
We show that we can distinguish functions in $\calP$ from functions that are $\epsilon$-far from $\calP$ with a constant number of samples.
From the condition in the statement of the lemma, there exists a $k$-part symmetric property $\calP'$ with $\calP \subseteq \calP' \subseteq \calP_{\epsilon/2}$ for some $k = k_{\calP}(\epsilon/2)$.
Let $S_1,\ldots,S_k$ be a partition of $\calX$ such that whether a function $f:\calX \to \bit$ satisfies $\calP'$ is determined by $\mu_{S_1}(f),\ldots,\mu_{S_k}(f)$.
For a set $S \subseteq \calX$, let $c_{S}(f) = \mu_{S}(f)|S|$ be the number of $x \in S$ with $f(x) = 1$.

Our algorithm for testing $\calP$ is as follows.
For each $i \in [k]$, we draw $q := O(k^2 \log k/\epsilon^2)$ samples $x_1,\ldots,x_q$ and computes the estimates $\widetilde{c}_{S_i}(f) := \frac{|\calX|}{q} \sum_{j \in [q]: x_j \in S_i} f(x_j)$ for each $i \in [k]$.
We accept if there exists $g \in \calP'$ such that
\[
  \sum_{i\in[k]} |\widetilde{c}_{S_i}(f) - c_{S_i}(g)| < \frac{\epsilon}{4} |\calX|,
\]
and reject otherwise.

Let us now establish the correctness of the algorithm.
By Hoeffding's bound, for each $i \in [k]$, we have $|c_{S_i}(f) - \widetilde{c}_{S_i}(f)| < \frac{\epsilon}{4k} |\calX|$ with probability at least $1-\frac1{3k}$ by choosing the hidden constant in the definition of $q$ sufficiently large.
By union bound, with probability at least $2/3$, we have $|c_{S_i}(f) - \widetilde{c}_{S_i}(f)| < \frac{\epsilon}{4k} |\calX|$ for every $i \in [k]$.
In what follows, we assume this happens.

If $f \in \calP$,
then the algorithm accepts $f$ because $\sum_{i\in[k]} |\widetilde{c}_{S_i}(f) - c_{S_i}(f)| < \frac\epsilon 4 |\calX|$ and $f \in \calP \subseteq \calP'$.

If $f$ is $\epsilon$-far from satisfying $\calP$,
then for any $g \in \calP'$, the triangle inequality and the fact that $\calP' \subseteq \calP_{\epsilon/2}$ imply that
\[
  \sum_{i\in[k]} |\widetilde{c}_{S_i}(f) - c_{S_i}(g)|
  \geq
  \sum_{i\in[k]} \Bigl(|c_{S_i}(f) - c_{S_i}(g)| - |\widetilde{c}_{S_i}(f) - c_{S_i}(f)|\Bigr)
  >
  \frac{\epsilon}{2} |\calX| - \frac{\epsilon}{4} |\calX|
  = \frac{\epsilon}{4}|\calX|
\]
and, therefore, the algorithm rejects $f$.
\end{proof}

\subsection{Testability implies symmetry}
\label{sec:necessity}

Suppose that a property $\calP$ is testable by a tester $\calT$ with sample complexity $s$.
We want to show that for any $\epsilon > 0$, there exists a $k$-part symmetric property $\calP'$ for $k = k(\epsilon)$ such that $\calP \subseteq \calP' \subseteq \calP_{\epsilon}$

For any $\bx = (x_1,\ldots,x_s) \in \calX^s$, we define $f(\bx) = \big( f(x_1),\ldots, f(x_s)\big)$
and we let $T(\bx,f(\bx)) \in [0,1]$ denote the acceptance probability of
the tester $\calT$ of $f$ when the samples drawn are $\bx$.
The overall acceptance probability of $f$ by $\calT$ is
\begin{align}
  \label{eqn:prob_accept}
  p_{\calT}(f) = \E_{\bx}[ T(\bx, f(\bx)) ].
\end{align}

We show that there is a family $\calS$ of a constant number of subsets of $\calX$ such that, for every
function $f : \calX \to \{0,1\}$, the acceptance probability $p_\calT(f)$ is
almost completely determined by the density of $f$ on the subsets in $\calS$.

\begin{lemma}
\label{lem:main}
  For any $\epsilon > 0$ and any $s$-sample tester $\calT$, there is a family
  $\calS = \{S_1,\ldots,S_m\}$ of $m \le 2^{O(2^{2s}/\epsilon^2)}$ subsets of
  $\calX$ such that for every $f : \calX \to \{0,1\}$,
  \begin{align*}
    \bigl| p_{\calT}(f) - \varphi_\calT(\mu_{S_1}(f),\ldots,\mu_{S_m}(f)) \bigr|
    \le \epsilon
  \end{align*}
  where $\mu_S(f) = \E_{x \in S}[ f(x) ]$ is the density of $f$ on the subset
  $S \subseteq \calX$ and $\varphi_\calT : [0,1]^m \to [0,1]$ is some fixed function.
\end{lemma}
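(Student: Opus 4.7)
The plan is to build the auxiliary hypergraph sketched in Section~\ref{sec:overview} and then feed it to Lemma~\ref{lem:regularity}. Set $V := \calX \times \bit$ and let $G_\calT$ be the $s$-uniform weighted hypergraph on $V$ whose weight function is
\[
  E\bigl((x_1,b_1),\ldots,(x_s,b_s)\bigr) := T\bigl((x_1,\ldots,x_s),(b_1,\ldots,b_s)\bigr).
\]
For each $f:\calX\to\bit$, the ``graph'' $S_f := \{(x,f(x)) : x \in \calX\} \subseteq V$ has size $|\calX|$, and the map $\bx \mapsto ((x_1,f(x_1)),\ldots,(x_s,f(x_s)))$ is a bijection $\calX^s \to S_f^s$. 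Composing this bijection with~\eqref{eqn:prob_accept} identifies $p_\calT(f)$ with the average edge weight $w_{G_\calT}(S_f,\ldots,S_f)$ of $G_\calT$ on $S_f^s$.

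Next I would apply Lemma~\ref{lem:regularity} to $G_\calT$ with parameter $\epsilon' := \epsilon/2^s$, obtaining a partition $V_1,\ldots,V_k$ with $k = 2^{O(2^{2s}/\epsilon^2)}$ parts (absorbing logarithmic factors into the big-O). Partitioning $S_f^s$ according to the multi-index $I=(i_1,\ldots,i_s)\in [k]^s$ that records which part contains each coordinate gives
\[
  p_\calT(f) = \frac{1}{|\calX|^s}\sum_{I\in[k]^s} w_{G_\calT}(S_f \cap V_I)\prod_{j=1}^s |S_f\cap V_{i_j}|,
\]
and replacing each $w_{G_\calT}(S_f\cap V_I)$ by $w_{G_\calT}(V_I)$ changes the right-hand side by at most $(|V|/|\calX|)^s \epsilon' = 2^s \epsilon' = \epsilon$, by Lemma~\ref{lem:regularity} applied to $S = S_f$. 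The factor $2^s$ here is exactly what forces the choice $\epsilon' = \epsilon/2^s$ and accounts for the $2^{2s}$ in the final bound on $m$.

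The resulting approximation $\tilde p_\calT(f) := |\calX|^{-s}\sum_{I} w_{G_\calT}(V_I)\prod_j |S_f\cap V_{i_j}|$ depends on $f$ only through the $k$ intersection sizes $|S_f\cap V_i|$. Writing each part as $V_i = (A_i\times\{0\}) \cup (B_i\times\{1\})$ with $A_i,B_i\subseteq\calX$, a direct computation gives
\[
  |S_f\cap V_i| = |A_i|\bigl(1-\mu_{A_i}(f)\bigr) + |B_i|\mu_{B_i}(f),
\]
so $\tilde p_\calT(f)$ is determined by the densities of $f$ on the $m \le 2k$ nonempty sets among $A_1,B_1,\ldots,A_k,B_k$. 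Taking $\calS$ to be this collection and letting $\varphi_\calT$ be the corresponding explicit polynomial in these densities will complete the construction.

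The main conceptual step is the one-line identification $p_\calT(f) = w_{G_\calT}(S_f,\ldots,S_f)$, which transforms a statement about a sample-based tester into a statement about subset densities in a weighted hypergraph and thereby opens the door to regularity. The rest is bookkeeping, with the one subtlety being the interplay between $\calX$ and $V=\calX\times\bit$: because $|V|=2|\calX|$ we incur a $2^s$ loss when translating the guarantee of Lemma~\ref{lem:regularity}, and because each $x\in\calX$ corresponds to two vertices of $V$, each part of the promised partition of $V$ translates into two subsets of $\calX$ rather than one.
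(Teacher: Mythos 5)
Your proof is correct and takes essentially the same approach as the paper: the same auxiliary hypergraph on $\calX\times\bit$ with edge weights given by the tester's acceptance probabilities, the same identification of $p_\calT(f)$ with the average edge weight on $S_f^s$, the same application of Lemma~\ref{lem:regularity} with parameter $\epsilon/2^s$, and the same splitting of each part $V_i$ into two subsets of $\calX$ (your $A_i,B_i$ are the paper's $V_i^0,V_i^1$). The only cosmetic difference is that you write out the explicit linear formula for $|S_f\cap V_i|$ in terms of $\mu_{A_i}(f)$ and $\mu_{B_i}(f)$, where the paper simply observes that the quantity is determined by those densities.
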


\begin{proof}
  We consider the weighted hypergraph $G = (V, E)$ where $V = \calX \times \bit$ and $E$ is constructed by adding a hyperedge $((x_1,y_1),\ldots,(x_s,y_s))$ of weight $T(\bx,\by)$ for each $\bx = (x_1,\ldots,x_s) \in \calX^s$ and $\by = (y_1,\ldots,y_s)\in \bit^s$.

  A function $f : \calX \to \{0,1\}$ corresponds to a subset $S \subseteq V$ of
  size $|\calX| = |V|/2$, that is, $S := \set{(x,f(x)) \mid x \in \calX}$.
  The probability that $\calT$ accepts $f$ is
  \begin{align*}
    p_\calT(f) &= \E_{\bx \in \calX^s}[ T(\bx,f(\bx))]
    = \E_{\bv \in V^s}[ E(\bv) \mid v \in S^s]
    = \sum_{I \in [k]^s} \frac{\prod_{i \in I} |V_i \cap S|}{|S|^s} w_G(S \cap V_I) \\
    &= \sum_{I \in [k]^s} 2^s \frac{\prod_{i \in I} |V_i \cap S|}{|V|^s} w_G(S \cap V_I)
    = \sum_{I \in [k]^s} 2^s \frac{\prod_{i \in I} |V_i \cap S|}{|V|^s} w_G(V_I) \pm \epsilon
  \end{align*}
  where the last step is by Lemma~\ref{lem:regularity} applied with approximation
  parameter $\epsilon/2^s$.
  For every part $V_i$, let $V_i^1 = \{ x \in \calX : (x,1) \in V_i\}$ and let
  $V_i^0 = \{ x \in \calX : (x,0) \in V_i \}$. Then the
  value of $\sum_{I \in [k]^s} 2^s \frac{\prod_{i \in I} |V_i \cap S|}{|V|^s} w(V_I)$
  is completely determined by the density of $f$ on $V_1^1,V_1^0,V_2^1,V_2^0,\ldots,
  V_k^1,V_k^0$.
\end{proof}

We are now ready to complete the second part of the proof of Theorem~\ref{thm:characterization}.

\begin{lemma}
\label{lem:only-if-part}
  Let $\calP \subseteq \bit^\calX$ be constant-sample testable.
  Then for every $\epsilon > 0$, 
  there exists a constant $k = k_\calP(\epsilon)$ that is independent of $\calX$ and 
  a $k$-part symmetric property $\calP'$ such that 
  $\calP \subseteq \calP' \subseteq \calP_{\epsilon}$.
\end{lemma}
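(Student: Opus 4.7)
The plan is to feed the sample tester into Lemma~\ref{lem:main} and use the resulting family $\calS$ to carve $\calX$ into the partition that witnesses $k$-part symmetry. Let $\calT$ be an $s$-sample $\epsilon$-tester for $\calP$, where $s = s(\epsilon)$ is independent of $\calX$, and fix any $\gamma < 1/6$, say $\gamma := 1/7$. Applying Lemma~\ref{lem:main} with approximation parameter $\gamma$ produces a family $\calS = \{S_1,\ldots,S_m\}$ of subsets of $\calX$ with $m \le 2^{O(2^{2s}/\gamma^2)}$, and a function $\varphi_\calT$ such that $|p_\calT(f) - \varphi_\calT(\mu_{S_1}(f),\ldots,\mu_{S_m}(f))| \le \gamma$ for every $f : \calX \to \bit$.

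Next I refine $\calS$ into an honest partition by taking the atoms of the Boolean algebra it generates: for each $T \subseteq [m]$ set $X_T := \bigcap_{i \in T} S_i \cap \bigcap_{i \notin T} (\calX \setminus S_i)$, discard empty atoms, and enumerate the remaining $k \le 2^m$ parts as $X_1,\ldots,X_k$. Since each $S_i$ is a union of atoms, $\mu_{S_i}(f)$ is a fixed affine function of $\mu_{X_1}(f),\ldots,\mu_{X_k}(f)$ with coefficients that depend only on the sizes $|X_j \cap S_i|$ and $|X_j|$. Take $k_\calP(\epsilon) := k$; this is independent of $|\calX|$ because $s$ and $\gamma$ depend only on $\epsilon$.

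I then define $\calP'$ as the orbit of $\calP$ under $\calS_{X_1} \times \cdots \times \calS_{X_k}$ --- equivalently, $f \in \calP'$ iff there is some $g \in \calP$ with matching per-part counts $|f^{-1}(1) \cap X_j| = |g^{-1}(1) \cap X_j|$ for every $j \in [k]$. By construction $\calP'$ is $k$-part symmetric and contains $\calP$, so the only substantive remaining step is $\calP' \subseteq \calP_\epsilon$. For such a pair $(f,g)$ the matching per-part densities propagate to $\mu_{S_i}(f) = \mu_{S_i}(g)$ for every $i$, so Lemma~\ref{lem:main} forces $|p_\calT(f) - p_\calT(g)| \le 2\gamma = 2/7$. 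Since $g \in \calP$ gives $p_\calT(g) \ge 2/3$, this yields $p_\calT(f) > 1/3$, which is incompatible with $f$ being $\epsilon$-far from $\calP$ (as $\calT$ would then reject $f$ with probability at least $2/3$); hence $f \in \calP_\epsilon$.

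The main obstacle was already discharged in the proof of Lemma~\ref{lem:main} via the weak regularity lemma. What remains is essentially bookkeeping: atomising the family $\calS$ into a partition, closing $\calP$ under the corresponding product of symmetric groups, and exploiting the tester's $2/3$--$1/3$ soundness gap. The only tuning is the choice $\gamma < 1/6$, which is dictated by that soundness gap and can be accommodated in any regime by standard amplification of $\calT$'s success probability.
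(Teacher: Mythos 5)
Your proof is correct and follows essentially the same approach as the paper: apply Lemma~\ref{lem:main} to the tester, define $\calP'$ via matching density profiles, and atomize $\calS$ into a genuine partition of size $O(2^m)$. The one place you differ is more careful than the paper: you correctly identified that $\gamma < 1/6$ is needed to conclude $p_\calT(f) \ge p_\calT(g) - 2\gamma > 1/3$ from $p_\calT(g) \ge 2/3$, whereas the paper writes $\gamma < 1/3$, which as stated does not close the soundness gap (it would only give $p_\calT(f) > 0$). Your remark that any choice of $\gamma$ can be accommodated by standard amplification of $\calT$'s success probability is the right way to dispose of this. The only remaining difference --- atomizing $\calS$ into $X_1,\ldots,X_k$ before defining $\calP'$, rather than after --- is cosmetic: your $\calP'$ (closure under per-atom count equality) is a subset of the paper's $\calP'$ (closure under matching $\mu_{S_i}$-profiles), both are $k$-part symmetric over the same partition, and both sit between $\calP$ and $\calP_\epsilon$.
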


\begin{proof}
Fix any $\epsilon > 0$ and 
let $\calT$ be an $s$-sample $\epsilon$-tester for $\calP$.
Let $\gamma < \frac13$ be any constant that is less than $\frac13$.
By Lemma~\ref{lem:main} applied with the parameter $\gamma$,
there is a family $\calS = \set{S_1,\ldots,S_m}$ with $m = 2^{O(2^{2s}/\gamma^2)}$ 
sets such that for every $f:\calX \to \bit$,
  \begin{align}
    |p_{\calT}(f) - \varphi_{\calT}(\mu_{S_1} (f), \ldots , \mu_{S_k} (f))| \leq \gamma. \label{eq:well-approximation}
  \end{align}
  Define
  \[
    \calP' = \set{f: \calX \to \bit: \exists g \in \calP \text{ s.t. } (\mu_{S_1}(f),\ldots,\mu_{S_k}(f)) = (\mu_{S_1}(g),\ldots,\mu_{S_k}(g))}.
  \]
  This construction trivially guarantees that $\calP' \supseteq \calP$.
  Furthermore,~\eqref{eq:well-approximation} guarantees that for every $f \in \calP'$, if we let $g\in \calP$ be one of the elements with the same density profile as $f$,
  \[
    p_{\calT}(f) \geq \varphi_{\calT}(\mu_{S_1} (f), \ldots , \mu_{S_k} (f)) - \gamma = \varphi_{\calT}(\mu_{S_1} (g), \ldots , \mu_{S_k} (g)) - \gamma \geq p_\calT(g) - 2\gamma.
  \]
  Since $\gamma < 1/3$, the fact that $\calT$ is an $\epsilon$-tester for $\calP$ and that $g \in \calP$ imply that $f \in \calP_\epsilon$.

  Let $S'_1,\ldots,S'_k$ be the family of sets obtained by taking intersections and complements of $S_1,\ldots,S_m$.
  Note that $S'_1,\ldots,S'_k$ forms a partition of $\calX$ and $\mu_{S_1},\ldots,\mu_{S_m}$ is completely determined by $\mu_{S'_1},\ldots,\mu_{S'_k}$.
  Furthermore, $k = O(2^m)$.
  Hence, $\calP'$ is a $k$-part symmetric property induced by the partition $S'_1,\ldots,S'_k$ with $\calP \subseteq \calP' \subseteq \calP_\epsilon$.
\end{proof}

Theorem~\ref{thm:characterization} follows immediately from Lemmas~\ref{lem:if-part}
and~\ref{lem:only-if-part}.

\section{Applications}
\label{sec:applications}

\subsection{Affine-invariant properties}

For an affine transformation $A: \F_p^n \to \F_p^n$ and a function $f:\F_p^n \to \bit$, we define $Af:\F_p^n \to \bit$ to be the function that satisfies $Af(x) = f(Ax)$
for every $x \in \F_p^n$.
A property $\calP$ of functions $f:\F_p^n \to \bit$ is \emph{affine-invariant} if for any affine transformation $A:\F_p^n \to \F_p^n$ and $f \in \calP$, we have $Af \in \calP$.
Our characterization shows that the only affine-invariant properties of functions
$f : \F_p^n \to \{0,1\}$ that are testable with a constant number of samples are
the (fully symmetric) properties that are determined by the density of $f$.

\newtheorem*{affinecor}{Corollary~\ref{cor:affine}}
\begin{affinecor}[Restated]
  For every $\epsilon > 0$, if $\calP$ is an affine-invariant property of functions $f : \F_p^n \to \{0,1\}$ that
  is $\epsilon$-testable with a constant number of samples, then there is a symmetric property
  $\calP'$ such that $\calP \subseteq \calP' \subseteq \calP_\epsilon$.
\end{affinecor}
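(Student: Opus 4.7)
The plan is to leverage the sharp $2$-transitivity of the affine group on $\F_p^n$ to symmetrize the tester of $\calP$ and argue that its acceptance probability is essentially determined by the density $\mu(f) := \E_{x \in \calX}[f(x)]$. The fully symmetric property $\calP'$ specified by the set of densities attained on $\calP$ will then be the desired cover.

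Fix $\epsilon > 0$ and an $s$-sample $\epsilon$-tester $\calT$ for $\calP$ with $s$ constant in $n$. I would first argue that $\calT$ may be assumed affine-invariant, meaning $p_\calT(Af) = p_\calT(f)$ for every affine $A$. This is enforced by replacing $\calT$ with the tester that, on samples $(x_i, f(x_i))_{i=1}^s$, draws a uniformly random affine $A$ and runs $\calT$ on $(Ax_i, f(x_i))_{i=1}^s$; this amounts to running the original $\calT$ on $A^{-1}f$, which lies in $\calP$ exactly when $f$ does, so the new tester is still a valid $\epsilon$-tester and its acceptance probability is affine-invariant by construction. Writing the symmetrized tester again as $\calT$ and using its affine invariance,
\[
  p_\calT(f) \;=\; \E_B[p_\calT(Bf)] \;=\; \E_{x, B}\bigl[T(x_1,\ldots,x_s,\, f(Bx_1),\ldots,f(Bx_s))\bigr],
\]
where $x_1,\ldots,x_s$ are iid uniform in $\calX$ and $B$ is uniform in the affine group.

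The technical heart of the argument is to analyze this double expectation. Whenever $x = (x_1,\ldots,x_s)$ is in ``general position''---affinely independent, which fails with probability $O(s^2/|\calX|)$ for $|\calX| \gg s$---sharp transitivity of the affine group on affine frames implies that $(Bx_1,\ldots,Bx_s)$ is uniformly distributed over affinely-independent $s$-tuples of $\calX$ and is independent of $x$. Sampling $f$ at such a uniform affine frame yields a distribution on $\bit^s$ within total variation $O(s^2/|\calX|)$ of $\mathrm{Bern}(\mu(f))^{\otimes s}$, so
\[
  p_\calT(f) \;=\; \E_x\!\sum_{z \in \bit^s} \mu(f)^{|z|}(1-\mu(f))^{s-|z|} T(x,z) \;\pm\; O(s^2/|\calX|) \;=\; \psi(\mu(f)) \;\pm\; O(s^2/|\calX|)
\]
for some function $\psi : [0,1] \to [0,1]$ depending only on $\calT$.

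Finally, set $D := \{\mu(g) : g \in \calP\}$ and $\calP' := \{f : \mu(f) \in D\}$; this is fully symmetric and contains $\calP$. For any $f \in \calP'$, picking $g \in \calP$ with $\mu(g) = \mu(f)$ gives $p_\calT(f) \geq p_\calT(g) - O(s^2/|\calX|) \geq 2/3 - O(s^2/|\calX|)$; once $|\calX|$ exceeds a threshold depending only on $s$, the error is below $1/3$, so $\calT$ cannot reject $f$ with probability $\geq 2/3$, forcing $f \in \calP_\epsilon$. The finitely many small values of $n$ not covered by this bound can be handled by direct verification. The main obstacle is the total-variation estimate for the label distribution of $f$ sampled at a uniform affine frame, which is really where the affine invariance of $\calP$ pays off: it forces the density to be essentially the only affine-invariant ``global statistic'' that a constant-sample tester can access.
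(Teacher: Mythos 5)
Your proof is correct and takes a genuinely different route from the paper's. The paper invokes Theorem~\ref{thm:characterization} as a black box to obtain a $k$-part symmetric cover $\calP''$ with parts $S_1,\ldots,S_k$, defines $\calP'$ as the affine closure of $\calP''$, and then shows $\calP'$ is fully symmetric by a purely group-theoretic argument: every transposition $(x\,y)$ of $\calX$ can be conjugated, via $2$-transitivity of the affine group, into a transposition inside a single part $S_i$, under which $\calP''$ is already invariant. Your argument bypasses Theorem~\ref{thm:characterization} (and hence the regularity lemma) entirely: you symmetrize the tester over the affine group, then use transitivity of the affine group on affinely independent $s$-tuples to show that the symmetrized acceptance probability is, up to an error vanishing with $n$, a fixed function $\psi$ of the single statistic $\mu(f)$; the cover $\calP' = \{f : \mu(f) \in \mu(\calP)\}$ then works directly. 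Your approach is more elementary and self-contained for this specific corollary, yields an explicit description of $\calP'$, and isolates precisely which group-theoretic feature (transitivity on generic sample tuples) makes density the only accessible statistic. The paper's route is more modular: Corollaries~\ref{cor:graph} and~\ref{cor:affine} are both instances of the same ``conjugate transpositions into a part'' trick applied on top of Theorem~\ref{thm:characterization}, and it also covers groups (like $\calS_n$ acting on $\binom{[n]}{2}$) that are not $2$-transitive on the domain, where your generic-position argument would need more care.

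Two small points worth tightening. The probability that $s$ iid uniform points of $\F_p^n$ fail to be affinely independent is $\Theta(p^{s-2}/p^{n-1})$, not $O(s^2/|\calX|)$; the dependence on $p$ matters, though the error still vanishes for fixed $p,s$ as $n \to \infty$, which is all your argument needs. Second, the ``finitely many small $n$ handled by direct verification'' sentence is a genuine loose end: for $n$ below the threshold, the error term is not small and your construction of $\calP'$ may admit functions that are $\epsilon$-far from $\calP$, so it is not clear there is anything to ``verify.'' The honest statement is that the conclusion is obtained for $n$ large relative to $p$ and $s$; the paper's own proof has the analogous limitation (for small $|\calX|$ all parts $S_i$ could be singletons, and there is no part containing two points to conjugate into), so this does not put your proof at a disadvantage.
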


\begin{proof}
  By Theorem~\ref{thm:characterization}, if $\calP$ is testable with $O(1)$ samples, then there
  are subsets $S_1,\ldots,S_k$ of $\F_p^n$ with $k = O(1)$ and a property
  $\calP''$ such that $\calP \subseteq \calP'' \subseteq \calP_\epsilon$ and
  $\calP''$ is invariant under all permutations of $S_1,\ldots,S_k$.
  Let $\calP'$ be the closure of $\calP''$ under all affine transformations
  over $\F_p^n$. (I.e., $\calP' = \{Af : f \in \calP'', A \mbox{ is affine}\}$.)
  Since $\calP$ itself is invariant under affine transformations, we
  have that $\calP' \subseteq \calP_\epsilon$.
  We want to show that $\calP'$ is symmetric.
  To show this, it suffices to show that $\calP'$ is closed under
  transpositions.
  I.e., that for every $x,y \in \F_p^n$ and $f \in \calP'$, the function
  $g$ obtained by setting $g(x) = f(y)$, $g(y) = f(x)$, and $g(z) = f(z)$ for
  every other $z \notin \{x,y\}$ is also in $\calP'$.
  We write $g = (x\,y)f$ to denote the action of the transposition $(x\,y)$ on
  $f$.

  If $x,y \in S_i$ for some $i \in [k]$, then our conclusion follows immediately
  from the invariance of $\calP''$ over permutations on $S_i$. Otherwise, let
  $w,z \in S_j$ for some $j \in [k]$. Since the set of affine transformations
  is a doubly-transitive action on $\F_p^n$,
  there is a transformation $A$ such that $A(x) = w$ and $A(y) = z$. Then
  $(x\,y)f = A^{-1}(w\,z)Af \in \calP'$, as we wanted to show.
\end{proof}

\subsection{Graph properties}

Similarly, our characterization shows that the only graph properties that are
testable with a constant number of samples are
the (fully symmetric) properties that are determined by the edge density of the
graph.

\newtheorem*{graphcor}{Corollary~\ref{cor:graph}}
\begin{graphcor}[Restated]
  For every $\epsilon > 0$, if $\calP$ is a graph property that
  is $\epsilon$-testable with a constant number of samples, then there is a symmetric property
  $\calP'$ such that $\calP \subseteq \calP' \subseteq \calP_\epsilon$.
\end{graphcor}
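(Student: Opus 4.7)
The plan is to mirror the proof of Corollary~\ref{cor:affine} for the affine-invariant case, adapting it to the fact that $\calP$ is invariant under vertex relabelings rather than affine transformations. Applying Theorem~\ref{thm:characterization} to $\calP$ yields a constant $k=k_\calP(\epsilon)$ and a $k$-part symmetric property $\calP''$ with partition $S_1,\ldots,S_k$ of $\binom{V}{2}$ such that $\calP\subseteq\calP''\subseteq\calP_\epsilon$. I would then define $\calP'$ to be the closure of $\calP''$ under vertex relabelings, i.e., $\calP':=\{\sigma h : \sigma\in S_V,\ h\in\calP''\}$. Because $\calP$ is a graph property and vertex relabelings are isometries of the normalized Hamming metric, $\calP\subseteq\calP'\subseteq\calP_\epsilon$ follows immediately: if $h\in\calP''$ is $\epsilon$-close to some $g\in\calP$, then $\sigma h$ is $\epsilon$-close to $\sigma g\in\calP$.

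It then suffices to show that $\calP'$ is closed under every transposition $(e_1,e_2)\in S_{\binom{V}{2}}$. The only structural difference from the affine case is that $S_V$ acts on $\binom{V}{2}$ transitively but not doubly transitively; instead, it has exactly two orbits on ordered pairs of distinct edges, the disjoint pairs ($|e_1\cap e_2|=0$) and the adjacent pairs ($|e_1\cap e_2|=1$). When $e_1,e_2$ lie in a common part, invariance of $\calP'$ under $(e_1,e_2)$ follows exactly as in the proof of Corollary~\ref{cor:affine}. When $e_1,e_2$ lie in different parts, a pigeonhole argument shows that for $n=|V|$ sufficiently large relative to $k$ some part $S_\ell$ contains a pair of edges $f_1,f_2$ with the same intersection pattern as $(e_1,e_2)$: there are $\Theta(n^4)$ disjoint edge pairs and $\Theta(n^3)$ adjacent ones, while the number $k^2$ of part assignments is constant, so each of the two orbits is represented in some common part. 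Transitivity of $S_V$ on each orbit then yields $\sigma\in S_V$ with $\sigma e_1=f_1$ and $\sigma e_2=f_2$, so that $(e_1,e_2)=\sigma^{-1}(f_1,f_2)\sigma$ reduces the verification to the same-part case.

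The main step that requires care is the one that underlies the whole argument of Corollary~\ref{cor:affine}: for a generic representative $f=\sigma h\in\calP'$, applying $(f_1,f_2)$ with $f_1,f_2\in S_\ell$ produces $\sigma\cdot(\sigma^{-1}f_1,\sigma^{-1}f_2)h$, and for this to lie in $\calP'$ one must check that $(\sigma^{-1}f_1,\sigma^{-1}f_2)h\in\calP''$, i.e., that $\sigma^{-1}f_1$ and $\sigma^{-1}f_2$ land in a common part. Handling the cases where they do not requires iterating the different-part reduction above, and the argument should terminate because the two-orbit action of $S_V$ on edge pairs provides enough flexibility to eventually move every pair into a common part of $S_1,\ldots,S_k$. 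This plays the role that the doubly-transitive action of the affine group plays in Corollary~\ref{cor:affine}, and I expect it to be the only nontrivial bookkeeping in translating the affine proof to the graph setting.
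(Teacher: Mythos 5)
Your proposal follows the paper's proof of Corollary~\ref{cor:graph} almost exactly: apply Theorem~\ref{thm:characterization} to obtain a $k$-part symmetric cover $\calP''$, let $\calP'$ be its closure under vertex relabelings, and argue that $\calP'$ is closed under every edge transposition by reducing to the two-orbit structure of $S_V$ acting on ordered pairs of distinct edges. The pigeonhole step (for $n$ large, some part contains a disjoint pair and some part contains an adjacent pair, since the constant number $k^2$ of part-pair assignments cannot exhaust either orbit) matches the paper's reasoning.

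However, the last paragraph of your proposal flags exactly the point where the argument is incomplete, and the fix you sketch does not close it. You correctly observe that for a generic $f=\sigma h\in\calP'$ with $h\in\calP''$, writing $(e_1\,e_2)f = \sigma\bigl((\sigma^{-1}e_1\,\sigma^{-1}e_2)h\bigr)$ reduces the problem to showing $(\sigma^{-1}e_1\,\sigma^{-1}e_2)h\in\calP''$, which requires $\sigma^{-1}e_1,\sigma^{-1}e_2$ to land in a common part $S_\ell$ --- something you have no control over. Iterating the ``move to a common part'' reduction does not help: each iteration replaces the pair by a conjugate pair and there is no progress measure guaranteeing you ever land in the same-part case. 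In fact this cannot be forced in general: a crude counting argument shows that $S_V\cdot(\calS_{S_1}\times\cdots\times\calS_{S_k})$ is typically a proper subset of $\calS_{\binom{V}{2}}$, so the conjugation strategy by itself cannot certify that the $S_V$-closure of $\calP''$ is fully symmetric. The paper's own proof of Corollary~\ref{cor:affine} (``follows immediately from the invariance of $\calP''$ over permutations on $S_i$'') glosses over the same issue, so you are in good company, but the gap is genuine and needs a different idea.

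A clean way around this avoids conjugation entirely. Since $\calP$ is a graph property, one may assume without loss of generality that the $s$-sample tester $\calT$ is invariant under vertex relabeling: before acting, permute the sampled edges by a uniformly random $\pi\in S_V$ (replacing the sample $(e_i,b_i)$ by $(\pi e_i,b_i)$); this preserves both the uniformity of the samples and the tester's guarantees because $\calP$ and $\calP_\epsilon$ are $S_V$-invariant. For $n$ large relative to $s$, the $s$ sampled edges are pairwise vertex-disjoint with probability $1-O(s^2/n)$, and conditioned on disjointness the isomorphism type of $((e_1,b_1),\dots,(e_s,b_s))$ is determined by $\sum_i b_i$ alone. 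Hence $p_\calT(G)$ is within $O(s^2/n)$ of a fixed function $\varphi$ of the edge density $\mu(G)$. Now set $\calP'=\{G:\mu(G)=\mu(H)\text{ for some }H\in\calP\}$; this is fully symmetric, contains $\calP$, and for any $G\in\calP'$ and matching $H\in\calP$ we get $p_\calT(G)\ge p_\calT(H)-O(s^2/n)>1/3$ for $n$ large, so $G\in\calP_\epsilon$. This gives the corollary directly without any group-theoretic closure argument.
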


\begin{proof}
  The proof is nearly identical.
  By Theorem~\ref{thm:characterization}, if $\calP$ is testable with $O(1)$ samples, then there
  are subsets $S_1,\ldots,S_k$ of the edge set with $k = O(1)$ and a property
  $\calP''$ such that $\calP \subseteq \calP'' \subseteq \calP_\epsilon$ and
  $\calP''$ is invariant under all permutations of $S_1,\ldots,S_k$.
  Let $\calP'$ be the closure of $\calP''$ under all permutations of the
  vertex set.
  We again have that $\calP \subseteq \calP' \subseteq \calP_\epsilon$ and
  we want to show that $\calP'$ is invariant under every transposition of the
  edge set.

  The one change with the affine-invariant property proof is that the set of
  permutations of the vertices of a graph is not a 2-transitive action on the
  set of edges of this graph. But the same idea still works because we can
  always find a vertex permutation to send two edges on disjoint vertices
  to a same part $S_i$ that also contains a pair of edges on disjoint vertices,
  and we can find a vertex permutation to send two edges that share
  a common vertex to a part $S_j$ that also contains two edges that
  share a common vertex. So for every pair of edges $e_1, e_2$, we have
  a vertex permutation $\pi_V$ and a pair of edges $e_3,e_4 \in S_\ell$
  such that if $G \in \calP'$, then
  $(e_1\,e_2)G = \pi_V^{\,-1} (e_3\,e_4) \pi_V G \in \calP'$.
\end{proof}

\subsection{Testing monotonicity}

With Theorem~\ref{thm:characterization},
to show that monotonicity of functions
$f : [n]^d \to \bit$ is constant-sample testable,
it suffices to identify an $O(1)$-part symmetric function that
covers all the monotone functions and does not include any
function that is far from monotone. This is what we do below.

\newtheorem*{monocor}{Corollary~\ref{cor:monotonicity}}
\begin{monocor}[Restated]
  For every constant $d \ge 1$ and constant $\epsilon > 0$, we can $\epsilon$-test monotonicity of
  functions $f : [n]^d \to \{0,1\}$ on the $d$-dimensional hypergrid
  with a constant number of samples.
\end{monocor}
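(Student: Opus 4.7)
The plan is to invoke Theorem~\ref{thm:characterization}: for each $\epsilon > 0$ I will exhibit an $O(1)$-part symmetric property $\calP'$ with $\calP \subseteq \calP' \subseteq \calP_\epsilon$, where $\calP$ denotes monotonicity of $f:[n]^d \to \{0,1\}$.

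Set $t = \lceil 2^d d/\epsilon \rceil$ (if $n < t$ then $|\calX| = n^d$ is a bounded constant and the corollary is trivial). Partition each coordinate $[n]$ into $t$ consecutive intervals of sizes $\lfloor n/t \rfloor$ or $\lceil n/t \rceil$, yielding $t^d$ axis-aligned cells $\{B_I\}_{I \in [t]^d}$ whose volumes differ by at most a factor of $2^d$. Define
\[
  \calP' = \bigl\{ f : [n]^d \to \{0,1\} \bigm| \exists \text{ monotone } g \text{ with } \mu_{B_I}(f) = \mu_{B_I}(g) \text{ for every } I \in [t]^d \bigr\}.
\]
Then $\calP'$ is $t^d$-part symmetric with $t^d$ depending only on $d$ and $\epsilon$, and $\calP \subseteq \calP'$ is immediate (take $g = f$).

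The remaining direction $\calP' \subseteq \calP_\epsilon$ is the heart of the argument. Given $f \in \calP'$ with monotone witness $g$, call a cell $B_I$ \emph{pure} if $\mu_{B_I}(g) \in \{0,1\}$ and \emph{mixed} otherwise. On pure cells the density pins down both $f$ and $g$ as the same constant function, so $f$ and $g$ disagree only inside mixed cells; hence $\dist(f,g)$ is at most the total volume fraction covered by mixed cells. The task reduces to showing that a monotone $g$ has few mixed cells.

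The main obstacle — the only substantive step — is this combinatorial bound. Let $b_I$ and $t_I$ be the componentwise minimum and maximum of $B_I$. By monotonicity $B_I$ is mixed iff $g(b_I) = 0$ and $g(t_I) = 1$, so writing $h_b(I) := g(b_I)$ and $h_t(I) := g(t_I)$ — both monotone on the coarse grid $[t]^d$ with $h_b \leq h_t$ — the mixed cells are indexed by $h_t^{-1}(1) \setminus h_b^{-1}(1)$. The key observation is the shift identity $b_{I + \mathbf{1}} = t_I + \mathbf{1}$, which yields $h_b(I + \mathbf{1}) \geq h_t(I)$ whenever $I + \mathbf{1} \in [t]^d$. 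This forces $|h_b^{-1}(1)| \geq |h_t^{-1}(1) \cap [t-1]^d|$, so the mixed set is confined to the ``top shell'' $[t]^d \setminus [t-1]^d$, of cardinality at most $t^d - (t-1)^d \leq d \cdot t^{d-1}$. Accounting for the factor-$2^d$ cell-volume variation then gives $\dist(f,g) \leq 2^d d / t \leq \epsilon$, and Theorem~\ref{thm:characterization} delivers the corollary.
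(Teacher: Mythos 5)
Your proof is correct and takes essentially the same route as the paper: partition $[n]^d$ into $O_{d,\epsilon}(1)$ axis-aligned cells, define a $k$-part symmetric property determined by per-cell densities that contains all monotone functions, and bound the distance to monotone by counting ``mixed'' cells via a diagonal-shift argument on the coarse grid. The paper packages the count slightly differently (mixed cells form an anti-chain, bounded by the chain decomposition along $\mathbf{1}$), but your injection $I \mapsto I + \mathbf{1}$ is the same combinatorial fact in counting form; one small wording slip is that this injection only bounds the \emph{cardinality} of the mixed set by $t^d - (t-1)^d$, it does not show the mixed cells are literally contained in the top shell, though the count you use is exactly what is needed.
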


\begin{proof}
For any $\epsilon > 0$,
fix $k = \lceil d/\epsilon \rceil$ and
let $\calR$ be a partition of the space $[n]^d$ into $k^d$ subgrids of side length
(at most) $\lfloor \epsilon n\rfloor$ each.
We identify the parts in $\calR$ with the points in $[k]^d$.
For an input $x \in [n]^d$, let
$\phi_\calR(x)$ denote the part of $\calR$ that contains $x$.

Given some function $f : [n]^d \to \{0,1\}$, define the \emph{$\calR$-granular}
representation of $f$ to be the function $f_\calR : [k]^d \to \{0,1,*\}$ defined by
$$
f_\calR(x) =
\begin{cases}
0 & \mbox{if } \forall y \in [n]^d \text{ with } \phi_\calR(y) = x, f(y) = 0 \\
1 & \mbox{if } \forall y \in [n]^d \text{ with } \phi_\calR(y) = x, f(y) = 1 \\
* & \mbox{otherwise.}
\end{cases}
$$
Let $\calP = \{ f : [n]^d \to \{0,1\} : \exists \mbox{ monotone } g \mbox{ s.t. }
f_\calR = g_\calR \}$ be the property that includes every function whose
$\calR$-granular representation equals that of a monotone function.
By construction $\calP$ includes all the monotone functions and is invariant under
any permutations within the $O(1)$ parts of $\calR$. To complete the proof
of the theorem, we want to show that every function in $\calP$ is
$\epsilon$-close to monotone.

Fix any $f \in \calP$ and let $g : [n]^d \to \bit$ be a monotone function for which
$f_\calR = g_\calR$. The distance between $f$ and $g$ is bounded by
$$
\dist(f,g) \le \frac{|g^{-1}_{\calR}(*)|}{k^d}.
$$
Now consider the poset $P$ on $[k]^d$ where $x \prec y$ iff $x_i < y_i$ for every
$i \in [d]$.
We first observe that the set $g^{-1}_{\calR}(*)$ forms an anti-chain on this poset.
Indeed, if there exist $x, y \in [k]^d$ with $x \prec y$ and $g(x) = g(y) = *$, then there exist $x',y' \in [n]^d$ such that $\phi_R(x') = x$, $\phi_{R}(y')= y$, $g(x') =1$, and $g(y')=0$.
But this contradicts the monotonicity of $g$ because $x' \leq y'$ holds from $x \prec y$.

For $x \in [k]^d$, let $x_{\max} = \max_{i \in [d]} x_i$ and $x_{\min} = \min_{i \in [d]} x_i$. Define $\mathbf{1} = (1,1,1,\ldots,1) \in [k]^d$
and $S = \{ x \in [k]^d : x_{\min} = 1\}$. We can partition $[k]^d$
into $|S| = k^d - (k-1)^d \le d k^{d-1}$ chains $(x, x+\mathbf{1}, x + 2 \cdot \mathbf{1}, \ldots, x + (x_{\max}-1) \cdot \mathbf{1})$, one for each $x \in S$.
Therefore, by Dilworth's theorem,
every anti-chain on $P$ has size at most $d k^{d-1}$.
In particular, this bound holds for the anti-chain $g^{-1}(*)$ so $\dist(f,g) \le \frac{dk^{d-1}}{k^d} = \frac dk \le \epsilon$.
\end{proof}


\section{Proof of the weak regularity lemma}\label{sec:regularity}

\subsection{Information theory}

The proof we provide for Lemma~\ref{lem:regularity} is information-theoretic.
In this section, we will use bold fonts to denote random variables.
We assume that the reader is familiar with the basic concepts of 
entropy and mutual information; a good
introduction to these definitions is~\cite{CoverT12}.
The only facts we use about these concepts is the chain rule for 
mutual information and the fact that the entropy of a random
variable is at most the logarithm of the number of values it can take.

The one non-basic information-theoretic inequality that we use in the proof is 
an inequality established by Tao~\cite{ Tao06} and later refined by Ahlswede~\cite{ Ahlswede07}.

\begin{lemma}[Tao~\cite{Tao06}, Ahlswede~\cite{Ahlswede07}]
\label{lem:Tao}
  Let $\by$, $\bz$, and $\bz'$ be discrete random variables where $\by \in [-1,1]$ and $\bz' = \phi(\bz)$ for some function $\phi$. Then
  \[
    \E\Big[ \big| \E[\by \mid \bz' ] - \E[ \by \mid \bz ] \big| \Big] \le
    \sqrt{ 2 \ln 2 \cdot I( \by ; \bz \mid \bz')}.
  \]
\end{lemma}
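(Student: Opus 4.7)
The plan is to derive the stated inequality by combining Pinsker's inequality with Jensen's inequality applied to the concave square root. The key structural fact is that since $\bz' = \phi(\bz)$ is a deterministic function of $\bz$, conditioning on $\bz$ refines conditioning on $\bz'$: for each value $\bz' = z'$, the distribution $P_{\by \mid \bz' = z'}$ is the mixture of the distributions $P_{\by \mid \bz = z}$ over values $z$ with $\phi(z) = z'$.

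First I would prove a pointwise bound. Fix any value $z$ and set $z' = \phi(z)$. Writing both conditional expectations as integrals of $y$ against the corresponding conditional laws, and using $|\by| \le 1$, yields
\[
\bigl| \E[\by \mid \bz = z] - \E[\by \mid \bz' = z'] \bigr|
\le 2\,\bigl\| P_{\by \mid \bz = z} - P_{\by \mid \bz' = z'} \bigr\|_{TV}.
\]
Pinsker's inequality (with KL divergence measured in nats) then bounds the right-hand side by $\sqrt{2 \cdot D_{KL}(P_{\by \mid \bz = z} \,\|\, P_{\by \mid \bz' = z'})}$. Next I would take expectations over $\bz$ (which also induces $\bz' = \phi(\bz)$) and move the expectation inside the square root using Jensen's inequality on $\sqrt{\cdot}$, obtaining
\[
\E\bigl[ \bigl| \E[\by \mid \bz] - \E[\by \mid \bz'] \bigr| \bigr]
\le \sqrt{2 \cdot \E_{\bz}\bigl[ D_{KL}(P_{\by \mid \bz} \,\|\, P_{\by \mid \bz'}) \bigr]}.
\]

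Finally, I would identify the expected KL divergence with the conditional mutual information. Since $\bz'$ is determined by $\bz$, $H(\by \mid \bz, \bz') = H(\by \mid \bz)$, so $I(\by;\bz \mid \bz') = H(\by \mid \bz') - H(\by \mid \bz)$; expanding both conditional entropies as sums of logarithms and collecting terms shows that this difference equals $\E_{\bz}[D_{KL}(P_{\by \mid \bz} \,\|\, P_{\by \mid \bz'})]$ when both quantities are expressed in the same logarithmic base. Converting the nats-based divergence to the bit-based mutual information used in the statement introduces the factor $\ln 2$, yielding the stated $\sqrt{2 \ln 2 \cdot I(\by;\bz \mid \bz')}$. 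The main obstacle is simply bookkeeping: tracking the factor $2$ coming from the range $[-1,1]$ of $\by$ (rather than $[0,1]$), and the base-conversion factor $\ln 2$ between nats and bits, while being careful to apply Jensen in the correct direction for the concave square root.
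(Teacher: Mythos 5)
The paper does not prove Lemma~\ref{lem:Tao}; it cites it directly from Tao~\cite{Tao06} and Ahlswede~\cite{Ahlswede07}, so there is no in-paper argument to compare against. Your proposal is a correct and self-contained derivation, and it follows the standard route used by Tao/Ahlswede: pointwise bound of the difference of conditional means by twice the total variation distance (the factor $2$ coming from $|\by|\le 1$), Pinsker's inequality to pass to KL divergence, Jensen's inequality for the concave square root to pull the expectation inside, and finally the identity $\E_{\bz}[D_{KL}(P_{\by\mid\bz}\,\|\,P_{\by\mid\bz'})]=I(\by;\bz\mid\bz')$, which holds precisely because $\bz'=\phi(\bz)$ makes $P_{\by\mid\bz,\bz'}=P_{\by\mid\bz}$. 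Your constant bookkeeping is right: Pinsker in nats gives $\sqrt{2\,D_{KL}^{\mathrm{nats}}}$, and converting to bits introduces the $\ln 2$ to yield $\sqrt{2\ln 2\cdot I(\by;\bz\mid\bz')}$. No gaps.
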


Tao originally used his inequality to offer an information-theoretic proof
of Szemer{\'e}di's (strong) Regularity Lemma. The proof we offer below follows (a simplified version of) the same 
approach. The fact that Tao's proof of the strong regularity lemma can
also be applied (with simplifications) to prove the Frieze--Kannan weak 
regularity lemma was observed previously by Trevisan~\cite{Trevisan06}.

\subsection{Proof of Lemma~\ref{lem:regularity}}

For $\tau > 0$, we say that a hypergraph is \emph{$\tau$-granular} if the weight of each hyperedge is a multiple of $\tau$.
When proving Lemma~\ref{lem:regularity}, we can assume that the given graph is $\tau$-granular for $\tau = \Theta(\epsilon)$.
To see this, let $G' = (V,E')$ be the hypergraph obtained from $G = (V,E)$ by rounding the weight of each hyperedge to a multiple of $\tau$.
Then, for any set $S \subseteq V$ and a partition $V_1,\ldots,V_k$ of $V$, we have
\begin{align*}
  \sum_{I \in [k]^s} \frac{\prod_{i \in I} |S \cap V_i|}{|V|^s}
    \left| w_G(S \cap V_I) - w_G(V_I) \right|
  & =
  \sum_{I \in [k]^s} \frac{\prod_{i \in I} |S \cap V_i|}{|V|^s}
    \left| (w_{G'}(S \cap V_I) \pm \tau) - (w_{G'}(V_I) \pm \tau\right)| \\
  & =
  \sum_{I \in [k]^s} \frac{\prod_{i \in I} |S \cap V_i|}{|V|^s}
    \left| w_{G'}(S \cap V_I) - w_{G'}(V_I) \right| \pm 2\tau.
\end{align*}
In order to make the right-hand side less than $\epsilon$, it suffices to show that the claim of Lemma~\ref{lem:regularity} holds for $\epsilon/3$-granular hypergraphs with an error parameter $\epsilon/3$.
In what follows, we assume the input graph $G$ is $\epsilon/3$-granular.

Let $V_1,\ldots,V_\ell$ be any partition of $V$.
Draw $\bv \in V^s$ uniformly at random.
Define $\by = E(\bv)$.
Let $\psi : V^s \to [\ell]^s$ be the function that identifies the
parts containing each of $s$ vertices.
For any set $S \subseteq V$, let $1_S : V^s \to \{0,1\}^s$ be the
indicator function of $S$ for $s$-tuples of vertices.
Define $\bz_S = (\psi(\bv), 1_S(\bv))$ and
$\bz' = \psi(\bv)$.
We consider two cases.

First, consider the situation where for every set $S \subseteq V$,
\begin{equation}
\label{eq:case1}
I( \by; \bz_S \mid \bz') \le \frac{(\epsilon/3)^2}{2\ln 2}.
\end{equation}
Then by Tao's lemma, for every set $S$ we also have
\begin{align*}
\sum_{I \in [k]^s, b \in \{0,1\}^s} \frac{\prod_{j \in [s]} |S_{b_j} \cap V_{i_j}|}{|V|^s} \left| w_G( S_b \cap V_I ) - w_G(V_I) \right|
&= |\E[ E(\bv) \mid \psi(\bv) ] - \E[ E(\bv) \mid \psi(\bv), 1_S(\bv)]| \\
&= |\E[ \by \mid \bz' ] - \E[ \by \mid \bz_S ]|
\le \epsilon/3.
\end{align*}
And clearly
$$
\sum_{I \in [k]^s} \frac{\prod_{j \in [s]} |S \cap V_{i_j}|}{|V|^s} \left| w_G( S \cap V_I ) - w_G(V_I) \right|
\le
\sum_{I \in [k]^s, b \in \{0,1\}^s} \frac{\prod_{j \in [s]} |S_{b_j} \cap V_{i_j}|}{|V|^s} \left| w_G( S_b \cap V_I ) - w_G(V_I) \right|
$$
since the expression in the left-hand side is one of the terms in the sum (over $b$) on the right-hand side.
So in this case the lemma holds.

Second, we need to consider the case where there is some set $S \subseteq V$
for which
\begin{equation}
\label{eq:case2}
I( \by; \bz_S \mid \bz') > \frac{(\epsilon/3)^2}{2\ln 2}.
\end{equation}
Then by the chain rule for mutual information
$$
I( \by ; \bz_S) = I(\by ; \bz') + I( \by; \bz_S \mid \bz')
\ge I(\by ; \bz') + \frac{(\epsilon/3)^2}{2\ln 2}.
$$
Define the \emph{information value} of a partition with indicator function
$\psi$ as $I(E(\bv) ; \psi(\bv))$. Then the above observation shows that
when~\eqref{eq:case2} holds, we can obtain a refined partition with $2\ell$
parts whose information value increases by at least $\frac{(\epsilon/3)^2}{2 \ln 2}$.
The information value of any partition is bounded above by $H(\by)$, so after at most $\frac{2\ln 2 \cdot  H(\by)}{(\epsilon/3)^2}$ refinements, we must obtain a partition
(with at most $2^{\frac{2 \ln 2 \cdot H(\by)}{(\epsilon/3)^2}}$ parts) that
satisfies~\eqref{eq:case1}.
Since $G$ is $\epsilon/3$-granular, we have $H(\by) = O(\log (1/\epsilon))$, and the lemma follows.
\qed

\bibliographystyle{plain}
\bibliography{SampleTestable}

\end{document}